\newcommand{\F}{\mathbb F}
\newcommand{\Tr}{{\rm{Tr}}}
\newtheorem{Thm}{Theorem}[section]
\newtheorem{Lemma}{Lemma}[section]
\newtheorem{Remark}{Remark}[section]
\newtheorem{Corollary}{Corollary}[section]
\journal{Finite Fields Appl.}
\begin{document}

\begin{frontmatter}
\title{A general construction of permutation polynomials of the form $ (x^{2^m}+x+\delta)^{i(2^m-1)+1}+x$ over $\F_{2^{2m}}$}

\author[rvt]{Libo Wang}

\author[focal]{Baofeng Wu\corref{cor1}}
\ead{wubaofeng@iie.ac.cn}

\cortext[cor1]{Corresponding author}

%\address[rvt]{KLMM, Academy of Mathematics and Systems Science, Chinese Academy of Sciences, Beijing 100190,  China}

\address[rvt]{College of Information Science and Technology, Jinan University, Guangzhou 510632, China}

\address[focal]{State Key Laboratory of Information Security, Institute of Information Engineering, Chinese Academy of Sciences, Beijing 100093, China}

\begin{abstract}
	Recently, there has been a lot of work on constructions of permutation polynomials of the form $(x^{2^m}+x+\delta)^{s}+x$
	over the finite field $\F_{2^{2m}}$, especially in the case when $s$ is of the form $s=i(2^m-1)+1$ (Niho exponent).
In this paper, we further investigate permutation polynomials with this form. Instead of seeking for sporadic constructions of the parameter $i$,  we give a general sufficient condition on $i$ such that $(x^{2^m}+x+\delta)^{i(2^m-1)+1}+x$ permutes $\F_{2^{2m}}$, that is,   $(2^k+1)i \equiv 1 ~\textrm{or}~ 2^k~(\textrm{mod}~ 2^m+1)$,
where $1 \leq k \leq m-1$ is any integer. This generalizes a recent result obtained by Gupta and Sharma who actually dealt with the case $k=2$. It turns out that most of previous constructions of the parameter $i$ are covered by our result, and it yields many new classes of permutation polynomials as well.
\end{abstract}

\begin{keyword}
Finite field, Permutation polynomial, Niho exponent.
\end{keyword}

\end{frontmatter}

\section{Introduction}
\label{sec1}
Let $q$ be  a power of a prime $p$, and ${\F}_q$  be the finite field with $q$ elements.
A polynomial $f(x) \in {\F}_q[x]$ is called a  $permutation$  $polynomial$ (PP) if its associated
polynomial mapping $f: c \mapsto f(c) $  from ${\F}_q$ to itself is a bijection. PPs over
finite fields have important applications in cryptography, coding and combinatorial design. Classical results on properties, constructions and applications of PPs may be found in \cite{Lidl,Gullen}.
For some recent advances and contributions to this area, we refer to \cite{Hou,Gullen} and the references therein.

Helleseth and Zinoviev \cite{Hellesseth} first investigated PPs of the form
\[\left(\frac{1}{x^2+x+\delta}\right)^{2^{\ell}}+x\]
for the goal of deriving new identities on Kloosterman  sums over $\F_{2^n}$, where $\delta \in \F_{2^n}$ with $\Tr_1^n(\delta)=1$,
and $\ell=0$ or $1$.
This motivated Yuan and Ding \cite{YuanJ1}, Yuan, Ding ,Wang and Pieprzyk \cite{YuanJ2} to investigate the  permutation behavior
of polynomials having the form
\[(x^{p^k} - x+\delta)^s+L(x)\]
over  ${\F}_{p^n}$,  where $k,~s$ are integers, $\delta  \in {\F}_{p^n}$ and $L(x)$ is a linearized polynomial.
An extension  of the above work and some new classes of PPs were found in
\cite{Li,Yuan1,Zeng,Zha1}. Specially, Tu et al. \cite{Tu1}  proposed two classes of PPs over
$\F_{2^{2m}}$ of the form
\begin{eqnarray}
\label{1.01}
(x^{2^m} + x+\delta)^s+x
\end{eqnarray}
for some $s$ satisfies either
\[s(2^m+1) \equiv 2^m+1 ~(\textrm{mod} ~2^{2m}-1)\]
or
\[s(2^m-1) \equiv 2^m-1 ~(\textrm{mod} ~2^{2m}-1).\]
For these exponents, Zeng et al. \cite{Zeng1} further investigated
the permutation behavior of the polynomials having the form
\[(\Tr_m^n(x)+\delta)^s+L(x)\]
over finite field $\F_{2^n}$, where $m\left|\right.n$ and $L(x)=x$ or $\Tr_m^n(x)+x$, and $``\Tr_m^n(\cdot)"$ is the trace
function from ${\F}_{2^n}$ to ${\F}_{2^m}$ defined by
\[\Tr_m^n(x)=x+x^{2^m}+x^{2^{2m}}+ \cdots + x^{2^{n-m}}.\]
Inspired by the specific PPs obtained by Tu et al. in \cite{Tu1}, some classes of PPs with the form (\ref{1.01})
over $\F_{2^{2m}}$ for exponents $s$ satisfying $s(2^m+1) \equiv 2^m+1$ (mod $2^{2m}-1$)
or $s(2^m-1) \equiv 2^m-1$ (mod $2^{2m}-1$), were presented in \cite{Zha2,Wang}.

Very recently, Gupta and Sharma \cite{Gupta} further investigated  PPs with the form (\ref{1.01}) over $\F_{2^{2m}}$,
where $s=(2^m-1)i+1$ for some $i$ satisfying $5i \equiv 1 ~\textrm{or} ~4~(\textrm{mod} ~2^m+1)$. This result is quite interesting since it gives a much general sufficient condition on permutation property of polynomials  of the form (\ref{1.01}) in the  Niho exponent case. In this paper, we devote to finding more sufficient conditions on the parameter $i$
such that $(x^{2^m}+x+\delta)^{i(2^m-1)+1}+x$ permutes $\F_{2^{2m}}$. One condition we find is  $(2^k+1)i \equiv 1 ~\textrm{or}~ 2^k~(\textrm{mod}~ 2^m+1)$,
where $1 \leq k \leq m-1$  is any integer, which is a direct generalization of Gupta and Sharma's result (the case $k=2$).  However, it turns out that most of previous sporadic constructions of the parameter $i$ in the literature are covered by our result, and of course, it yields many new classes of permutation polynomials as well.

The rest of this paper is organized as follows. In Section \ref{sec2}, some preliminaries needed
 are presented.  In Section \ref{sec3}, we present our main result. Concluding remarks are given in Section \ref{sec4}.

\section{Preliminaries}
\label{sec2}
Let $m$ be an arbitrary positive integer. For each element $x$ in the finite field $\F_{2^{2m}}$, we denote $x^{2^m}$ by $\bar{x}$
in analogy with the usual complex conjugation. Obviously, we have $x+\bar{x} \in \F_{2^m}$ and $x\bar{x} \in \F_{2^m}$.
Define the $unit$ $circle$ of $\F_{2^{2m}}$ by the set
\[U=\{x \in \F_{2^{2m}} : x^{2^m+1}=\bar{x}x=1\}.\]

Over the field $\F_{2^{2m}}$, the so-called $Niho$ $exponent$ \cite{Niho} $s$ has the form $s=(2^m-1)i+2^j$ for two integers $i$ and $j$.
In particular, the exponent $s=(2^m-1)i+1$ is called $normalized$  $Niho$ $exponent$. In this paper, we mainly focus on PPs with form
(\ref{1.01}) over $\F_{2^{2m}}$ with $normalized$  $Niho$ $exponent$ $s$.

We give some lemmas needed in the following sections.

%\begin{Lemma} (see \cite{Lidl})
%\label{Lemma2.1}
%Let $a,b \in \F_{2^n}$ with $a \neq 0$. The equation $x^2+ax+b=0$ has two solutions in $\F_{2^n}$ if and only if $\Tr_1^n(\frac{b}{a^2}) =0$.
%\end{Lemma}

\begin{Lemma} (see \cite[Lemma 2]{Tu1})
\label{Lemma2.2}
Let $a,b$ be two elements of $\F_{2^{2m}}$. Then the equation $\bar{x}+ax+b=0$  has solutions in ${\F}_{2^{2m}}$  as follows:
\begin{align*}
\left\{\begin{array}{ll}
$a unique solution$ ~ \frac{\bar{a}b+\bar{b}}{a\bar{a}+1},   &\mbox{if $a\bar{a} \neq 1$ },\\
2^m ~ solutions,     &\mbox{if $a\bar{a}= 1$ and $\bar{a}b+\bar{b} =0$}, \\
$no solution$,     &\mbox{if $a\bar{a}= 1$ and $\bar{a}b+\bar{b} \neq 0$}.
\end{array}\right.
\end{align*}
\end{Lemma}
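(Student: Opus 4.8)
The plan is to exploit that the conjugation map $x\mapsto\bar x=x^{2^m}$ is an involutive field automorphism of $\F_{2^{2m}}$ whose fixed field is $\F_{2^m}$. Applying it to the equation $\bar x+ax+b=0$ (equivalently $\bar x=ax+b$ in characteristic $2$) and using $\bar{\bar x}=x$ yields the companion relation $x=\bar a\bar x+\bar b$. Substituting $\bar x=ax+b$ into this relation eliminates $\bar x$ and collapses the two equations into the single linear relation
\[
(a\bar a+1)\,x=\bar a b+\bar b,
\]
whose coefficient $a\bar a+1$ lies in $\F_{2^m}$. This identity is a \emph{necessary} condition satisfied by every root, and it immediately splits the analysis into the three stated cases according to whether $a\bar a+1$ vanishes.

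First I would treat the non-degenerate case $a\bar a\neq1$. Here $a\bar a+1\neq0$, so the displayed relation forces $x=(\bar a b+\bar b)/(a\bar a+1)$ as the only possible root. Since the derivation produced only a necessary condition, I would then verify by direct substitution that this candidate genuinely satisfies $\bar x+ax+b=0$; writing $D=a\bar a+1=\bar D\in\F_{2^m}$ and using $\overline{\bar a b+\bar b}=a\bar b+b$, the numerator of $\bar x+ax+b$ collapses to $0$ in characteristic $2$. This establishes both existence and uniqueness of the single solution.

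For the degenerate case $a\bar a=1$ the coefficient vanishes and the necessary relation reads $0=\bar a b+\bar b$. If $\bar a b+\bar b\neq0$ this is violated, so no root exists. The substantive point is the remaining subcase $\bar a b+\bar b=0$, where I must show a solution not only could exist but actually does, and then count the roots. I would regard $L(x)=\bar x+ax$ as an $\F_2$-linear endomorphism of $\F_{2^{2m}}$, so the solution set of $L(x)=b$ is either empty or a coset of $\ker L$. The kernel is $\{x:\bar x=ax\}$; for $x\neq0$ this is $x^{2^m-1}=a$, and since $a\bar a=1$ means $a\in U$, which lies in the image of the $(2^m-1)$-power map on $\F_{2^{2m}}^{*}$ (a map whose fibres have size $\gcd(2^m-1,2^{2m}-1)=2^m-1$), there are $2^m-1$ such $x$; together with $x=0$ this gives $|\ker L|=2^m$.

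The hard part is ensuring $b\in\mathrm{im}\,L$, since the elimination argument alone yields only necessity. I would settle this by a counting argument: a direct computation shows $\bar a\,L(x)+\overline{L(x)}=(a\bar a+1)x=0$ for all $x$, so $\mathrm{im}\,L\subseteq\{b:\bar a b+\bar b=0\}$, and this latter set also has exactly $2^m$ elements (again because $b^{2^m-1}=\bar a\in U$ has $2^m-1$ solutions, plus $b=0$). Since both sets have size $2^m$, they coincide, whence every $b$ with $\bar a b+\bar b=0$ lies in $\mathrm{im}\,L$ and the equation has exactly $|\ker L|=2^m$ solutions. The main obstacle throughout is precisely this final existence-and-counting step in the degenerate case; the non-degenerate case is routine once the unique candidate is checked against the original equation.
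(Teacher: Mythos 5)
Your proposal is correct and complete; note that there is nothing in this paper to compare it against, since Lemma \ref{Lemma2.2} is quoted from \cite[Lemma 2]{Tu1} and no proof is reproduced here. Your argument checks out at every step: conjugating $\bar x+ax+b=0$ and eliminating $\bar x$ yields the necessary relation $(a\bar a+1)x=\bar a b+\bar b$, which settles uniqueness when $a\bar a\neq 1$ and rules out solutions when $a\bar a=1$, $\bar a b+\bar b\neq 0$; your direct substitution confirming that $(\bar a b+\bar b)/(a\bar a+1)$ really is a root is precisely the step a careless elimination argument would omit, and you supply it. In the degenerate case your counting is right: $a\bar a=1$ means $a\in U$, and since $U$ is exactly the image of $x\mapsto x^{2^m-1}$ on $\F_{2^{2m}}^{*}$, with fibres of size $\gcd(2^m-1,2^{2m}-1)=2^m-1$, the kernel of $L(x)=\bar x+ax$ has $2^m$ elements; the identity $\bar a L(x)+\overline{L(x)}=(a\bar a+1)x=0$ places $\mathrm{im}\,L$ inside the $2^m$-element set $\{b:\bar a b+\bar b=0\}$, and rank--nullity forces equality, giving existence and exactly $2^m$ solutions as a coset of the kernel. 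One small presentational gap: when you assert ``both sets have size $2^m$'' you are tacitly using $|\mathrm{im}\,L|=2^{2m}/|\ker L|=2^m$; this deserves one explicit line, though it is standard linear algebra over $\F_2$. For comparison, the alternative classical route for the degenerate case is Hilbert 90: $a\bar a=1$ gives $a=\bar c/c$ for some $c\in\F_{2^{2m}}^{*}$, and multiplying the equation through by $c$ reduces it to a trace-type affine equation over $\F_{2^m}$; your image-counting argument achieves the same conclusion without invoking that theorem, which is arguably more self-contained.
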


\begin{Lemma} (see \cite[Lemma 3]{Tu1})
\label{Lemma2.3}
For any  $\theta  \in \F_{2^{2m}} \setminus \F_{2^m}$, the mapping $\Phi: x \mapsto \frac{\bar{\theta}+z}{\theta+z}$ from
$\F_{2^m}$ to $U\setminus \{1\}$
is  bijective, where $U$ is the unit circle of $\F_{2^{2m}}$.
\end{Lemma}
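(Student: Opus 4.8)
The plan is to show in turn that $\Phi$ is well defined, that its image lies in $U\setminus\{1\}$, and finally that it is a bijection; for the last and most delicate part I intend to exhibit the inverse map explicitly, which simultaneously settles surjectivity and injectivity. (I read the map as $z\mapsto\frac{\bar\theta+z}{\theta+z}$ with $z\in\F_{2^m}$, the roman $x$ in the statement being a typo for $z$.)

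First I would note that the map is well defined on all of $\F_{2^m}$: for $z\in\F_{2^m}$ one has $\theta+z\neq0$, since $\theta+z=0$ would force $\theta=z\in\F_{2^m}$, contradicting $\theta\notin\F_{2^m}$. Next, to see that $\Phi(z)$ lands in the unit circle, I would use the two basic facts $\bar z=z$ (as $z\in\F_{2^m}$) and $\bar{\bar\theta}=\theta$. Conjugating gives $\overline{\Phi(z)}=\frac{\theta+z}{\bar\theta+z}$, whence $\Phi(z)\,\overline{\Phi(z)}=1$, so $\Phi(z)\in U$. Moreover $\Phi(z)=1$ would give $\bar\theta+z=\theta+z$, i.e. $\bar\theta=\theta$, which is impossible; hence $\Phi(z)\in U\setminus\{1\}$.

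For bijectivity I would solve the equation $u=\frac{\bar\theta+z}{\theta+z}$ for $z$ in terms of $u\in U\setminus\{1\}$. Clearing denominators and using characteristic $2$ yields $z=\frac{\bar\theta+u\theta}{u+1}$, which is well defined precisely because $u\neq1$. The one point needing care is to check that this candidate preimage actually lies in $\F_{2^m}$, i.e. that $\bar z=z$. Here I would invoke $u\in U$, so that $\bar u=u^{2^m}=u^{-1}$, and then a short conjugation computation (multiplying numerator and denominator by $u$) returns the same expression, confirming $\bar z=z$. This produces a two-sided inverse and establishes that $\Phi$ is a bijection.

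I expect the main obstacle to be exactly this verification that the preimage lands in $\F_{2^m}$, since it is the only step that genuinely uses the defining relation $u^{2^m+1}=1$ of the unit circle rather than mere field arithmetic. If one prefers to avoid constructing the inverse, an alternative is to prove injectivity directly: cross-multiplying $\Phi(z_1)=\Phi(z_2)$ reduces, after cancellation in characteristic $2$, to $(\bar\theta+\theta)(z_1+z_2)=0$, and $\bar\theta\neq\theta$ forces $z_1=z_2$; one then finishes by the cardinality count $|U\setminus\{1\}|=2^m=|\F_{2^m}|$, so that an injection between finite sets of equal size is automatically onto.
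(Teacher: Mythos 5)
Your proof is correct and complete. Note that the paper itself offers no proof of this lemma to compare against: it is quoted verbatim from Tu, Zeng and Jiang (\cite[Lemma 3]{Tu1}) and used as a black box, so your blind attempt is in fact supplying an argument the paper omits. All the steps check out: well-definedness ($\theta+z\neq 0$ because $\theta\notin\F_{2^m}$), the conjugation computation giving $\Phi(z)\overline{\Phi(z)}=1$, the exclusion of the value $1$ via $\theta\neq\bar\theta$, and the explicit inverse $z=\frac{\bar\theta+u\theta}{u+1}$ whose membership in $\F_{2^m}$ follows, as you say, from $\bar u=u^{-1}$ by multiplying numerator and denominator by $u$ --- you are right that this is the one step where the relation $u^{2^m+1}=1$ is genuinely needed. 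Your fallback route (injectivity by cross-multiplication, reducing to $(\theta+\bar\theta)(z_1+z_2)=0$, plus the count $\lvert U\setminus\{1\}\rvert=2^m+1-1=\lvert\F_{2^m}\rvert$) is equally valid and slightly shorter, at the cost of not exhibiting the inverse that is occasionally useful elsewhere (e.g.\ in the substitution $\lambda=\frac{z+\bar\theta}{z+\theta}$ made before Eq.~(\ref{3.15})). One micro-remark: when concluding $\Phi(z)\in U$ you should also observe $\bar\theta+z\neq 0$ (by the same argument as for the denominator, since $\bar\theta\notin\F_{2^m}$), so that $\Phi(z)\neq 0$; this is immediate but worth a clause.
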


\begin{Lemma}
\label{Lemma2.4}
Let $k$ be a positive integer, and $x,y$ be  two elements of $\F_{2^{n}}$. Then
\begin{eqnarray}
\label{2.01}
x^{2^k+1}+y^{2^k+1}=(x+y)^{2^k+1}+\sum\limits_{i=0}^{k-1}(xy)^{2^i}(x+y)^{2^k-2^{i+1}+1}.
\end{eqnarray}
\end{Lemma}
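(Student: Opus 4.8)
The plan is to work in characteristic two throughout and to reduce the claimed identity to a simpler ``cross-term'' identity, which I then establish by induction on $k$. First I would expand the leading term on the right-hand side via the Frobenius identity $(x+y)^{2^k}=x^{2^k}+y^{2^k}$, valid since $\F_{2^n}$ has characteristic two. This gives
\begin{align*}
(x+y)^{2^k+1}&=(x^{2^k}+y^{2^k})(x+y)\\
&=x^{2^k+1}+y^{2^k+1}+x^{2^k}y+xy^{2^k},
\end{align*}
so that, using that signs are irrelevant in characteristic two,
\[
x^{2^k+1}+y^{2^k+1}=(x+y)^{2^k+1}+\bigl(x^{2^k}y+xy^{2^k}\bigr).
\]
Hence the lemma is equivalent to the auxiliary identity
\begin{equation}
x^{2^k}y+xy^{2^k}=\sum_{i=0}^{k-1}(xy)^{2^i}(x+y)^{2^k-2^{i+1}+1}. \tag{$\ast$}
\end{equation}

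Next I would prove $(\ast)$ by induction on $k$. The base case $k=1$ is immediate, since the sum then consists of the single term $(xy)(x+y)^{2-2+1}=xy(x+y)=x^2y+xy^2$. For the inductive step, assuming $(\ast)$ at level $k$, I would split off the $i=k$ term of the sum at level $k+1$, namely $(xy)^{2^k}(x+y)$, and factor a common power $(x+y)^{2^k}$ out of the remaining terms using the exponent bookkeeping $2^{k+1}-2^{i+1}+1=2^k+(2^k-2^{i+1}+1)$. The induction hypothesis then collapses the remaining sum to $(x+y)^{2^k}\bigl(x^{2^k}y+xy^{2^k}\bigr)$. Expanding $(x+y)^{2^k}=x^{2^k}+y^{2^k}$ and adding back the split-off term, the cross terms $x^{2^k+1}y^{2^k}$ and $x^{2^k}y^{2^k+1}$ appear twice and hence cancel in characteristic two, leaving exactly $x^{2^{k+1}}y+xy^{2^{k+1}}$, which is the left-hand side of $(\ast)$ at level $k+1$.

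I expect the only delicate point to be the exponent accounting in the inductive step, i.e.\ verifying that for every $i$ with $0\le i\le k-1$ the power of $(x+y)$ at level $k+1$ exceeds the corresponding power at level $k$ by precisely $2^k$, so that a single factor $(x+y)^{2^k}$ can be pulled out uniformly; one should also note that these exponents stay nonnegative, as $2^k-2^{i+1}+1\ge 1$ in the relevant range. Once this is checked, the rest reduces to routine characteristic-two cancellations. An alternative would be to induct directly on the full statement rather than on $(\ast)$, exploiting the self-similarity that replacing $(x,y)$ by $(x^2,y^2)$ carries the level-$k$ expression to the level-$(k+1)$ one; but isolating $(\ast)$ keeps the cancellations most transparent.
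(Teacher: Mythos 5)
Your proof is correct and takes essentially the same route as the paper's: both argue by induction on $k$, and your inductive step --- factoring $(x+y)^{2^k}$ out of the sum, splitting off the term $(xy)^{2^k}(x+y)$, and cancelling the doubled cross terms $x^{2^k+1}y^{2^k}+x^{2^k}y^{2^k+1}$ in characteristic two --- is exactly the paper's manipulation of multiplying the level-$k$ identity by $(x+y)^{2^k}$ and adding $x^{2^k}y^{2^k}(x+y)$, just read in the other direction. The only difference is organizational: you first strip off $(x+y)^{2^k+1}$ by Frobenius and induct on the auxiliary cross-term identity $(\ast)$, while the paper inducts on the full identity directly; the underlying computation is identical.
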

\begin{proof}
We prove this result by induction on $k$.

(i) For $k=1$, it is easy to verify that (\ref{2.01}) holds.

(ii) Assuming that (\ref{2.01}) holds for $k >1$, we then prove that it also holds for $k+1$.
Multiplying both sides of  (\ref{2.01}) by $(x+y)^{2^k}$, we get
\begin{eqnarray*}
(x^{2^k+1}+y^{2^k+1})(x+y)^{2^k}=(x+y)^{2^{k+1}+1}+\sum\limits_{i=0}^{k-1}(xy)^{2^i}(x+y)^{2^{k+1}-2^{i+1}+1}.
\end{eqnarray*}
Adding $x^{2^k}y^{2^k}(x+y)$ to both sides of above equation, we have
\begin{eqnarray*}
x^{2^{k+1}+1}+y^{2^{k+1}+1}=(x+y)^{2^{k+1}+1}+\sum\limits_{i=0}^{k}(xy)^{2^i}(x+y)^{2^{k+1}-2^{i+1}+1},
\end{eqnarray*}
that is to say, (\ref{2.01}) holds for $k+1$.
The proof is finished.
\end{proof}

\section{PPs with form (\ref{3.01})}
\label{sec3}
In this section, we investigate the permutation behavior of
\begin{eqnarray}
\label{3.01}
f(x)=(x^{2^m}+x+\delta)^{i(2^m-1)+1}+x
\end{eqnarray}
over $\F_{2^{2m}}$ for any $\delta \in \F_{2^{2m}}$, where $i$ satisfies
$(2^k+1)i\equiv 1~\textrm{or}~ 2^k (\textrm{mod} ~2^m+1)$ for $1 \leq k \leq m-1$.

In order to simplify the proof of the main result of the present paper,  we firstly give some discussions,
which have already  been given by Zha et al. in \cite{Zha2}, for the sake of completeness.

If $\Tr_m^{2m}(\delta)=0$, then $f(x)=(x^{2^m}+x+\delta)^{s}+x$ permutes $\F_{2^{2m}}$ for any exponent $s$.
This is because $(x^{2^m}+x+\delta)^s+x=\gamma$  yields  $(x^{2^m}+x+\delta)^s+\bar{x}=\bar{\gamma}$,
which means $x=(\bar{\gamma}+\gamma+\delta)^s+\gamma$. Therefore,
we  only need to consider the  case $\Tr_m^{2m}(\delta) \neq 0$ in the rest of this paper.

If   $\Tr_m^{2m}(\delta) \neq 0$, to prove  $f(x)$ permutes $\F_{2^{2m}}$,
it suffices to prove that for any $\gamma \in \F_{2^{2m}}$, the equation
\begin{eqnarray}
\label{3.02}
(\bar{x}+x+\delta)^{i(2^m-1)+1}=x+\gamma
\end{eqnarray}
has at most one solution in $\F_{2^{2m}}$.

Let $\theta=\delta+\gamma+\bar{\gamma}$. Then we have $\theta+\bar{\theta}=\delta+\bar{\delta}=\Tr_m^{2m}(\delta) \neq 0$,
and thus $\theta \neq 0$. Hence $x=\gamma$ is not a solution of Eq. (\ref{3.02}).

Raising both sides of Eq. (\ref{3.02}) to the $(2^m+1)$-th power gives
\[(\bar{x}+x+\delta)^{2^m+1}=(x+\gamma)^{2^m+1},\]
which means
\begin{equation}\label{barxxdelta}
\bar{x}+x+\delta=\lambda (x+\gamma)
\end{equation}
for some $\lambda \in U$, i.e.,
\begin{eqnarray}
\label{3.04}
\bar{x}+(1+\lambda)x+\delta+\lambda \gamma=0.
\end{eqnarray}
Substituting Eq. \eqref{barxxdelta} into Eq. (\ref{3.02}), we obtain
%\[\lambda^{i(2^m-1)+1}(x+\gamma)^{i(2^m-1)+1}=x+\gamma,\]
\begin{eqnarray}
\label{3.05}
\lambda^{1-2i}(\bar{x}+\bar{\gamma})^i+(x+\gamma)^i=0.
\end{eqnarray}
The  number of solutions of Eq. (\ref{3.04}) can be determined
by Lemma \ref{Lemma2.2}.  We define two sets as
\[U_1=\{\lambda \in U: (1+\bar{\lambda})(1+\lambda)=1\}\]
and
\[U_2=\{\lambda \in U: (1+\bar{\lambda})(1+\lambda) \neq 1\}.\]
$U_1$ and $U_2$ form a disjoint partition of $U$. It can be verified that
\[(1+\bar{\lambda})(1+\lambda)+1=1+\lambda+\bar{\lambda}=\frac{1+\lambda+\lambda^2}{\lambda}.\]
If $\lambda \in U_1$, then $1+\lambda+\lambda^2=0$, i.e., $\lambda^3=1$ and $\lambda \neq 1$.
Note that $\lambda^{2^m+1}=1$ and $(3,2^m+1)=1$ when $m$ is even, therefore, $U_1=\emptyset$ for even $m$.

Now the solutions of Eq. (\ref{3.02}) are divided into the following two cases.

{\bf Case I.} If $\lambda \in U_1$, Eq. (\ref{3.04}) has solutions ($2^m$ solutions) if and only if
\[(1+\bar{\lambda})(\delta+\lambda \gamma)+\bar{\delta}+\bar{\lambda} \bar{\gamma}=0,\]
which implies that $\delta+\bar{\delta}+\bar{\lambda}(\delta+\bar{\gamma}+\gamma)=0$, i.e.,
$\theta+\bar{\theta}+(1+\lambda)\theta=\lambda\theta+\bar{\theta}=0$. Hence, in this case,
there is at most one (none, when $m$ is even) $\lambda \in U_1$  and
it must be equal to the fixed value $\frac{\bar{\theta}}{\theta}$ for a given $\gamma$.

{\bf Case II.} If $\lambda \in U_2$, by Lemma \ref{Lemma2.2}, we can get that Eq. (\ref{3.04}) has one solution
\begin{eqnarray}
\label{3.06}
x=\frac{(1+\bar{\lambda})(\delta+\lambda \gamma)+\bar{\delta}+\bar{\lambda} \bar{\gamma}}{(1+\bar{\lambda})(1+\lambda)+1}.
\end{eqnarray}
Note that $\lambda \in U_2$ is undetermined. Thus, we need  further arguments to fix it. Obviously,  Eq. (\ref{3.06}) leads to
\begin{eqnarray}
\label{3.07}
x+\gamma=\frac{\delta+\bar{\delta}+\bar{\lambda}(\delta+\gamma+\bar{\gamma})}{(1+\bar{\lambda})(1+\lambda)+1}=
\frac{\theta+\bar{\theta}+\bar{\lambda}\theta} {1+\lambda+\bar{\lambda}}
\end{eqnarray}
and $\theta+\bar{\theta}+\bar{\lambda}\theta \neq 0$. Substituting it into Eq. (\ref{3.05}), we obtain a relation between $\lambda$ and $\theta$,

\begin{eqnarray}
\label{3.08}
\lambda^{1-2i}(\theta+\bar{\theta}+\lambda\bar{\theta} )^i+(\theta+\bar{\theta}+\bar{\lambda}\theta )^i=0.
\end{eqnarray}

To summarize, if we wish to prove the permutation property of $f(x)$, it desires to prove that there is at most one
$\lambda \in U$ such that Eq. (\ref{3.02}) has at most one solution in Case I or Case II.

With the above preparations, based on deeper arguments on Case I and Case II,
we investigate the permutation behavior of polynomials with form (\ref{3.01}) in the following.

\subsection{The parity of $m$ and $k$ is different}
\label{subsec3.1}
\begin{Thm}
\label{Thm3.1}
Let $m$  and $k$ be two positive integers with different parity. For any $\delta \in \F_{2^{2m}}$, the polynomial
\[f(x)=(x^{2^m}+x+\delta)^{(2^m-1)i+1}+x\]
permutes $\F_{2^{2m}}$, where $i$ satisfies $(2^k+1)i\equiv 1 ~(mod~2^m+1)$.
\end{Thm}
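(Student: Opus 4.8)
The plan is to show, for each $\gamma\in\F_{2^{2m}}$ with $\Tr_m^{2m}(\delta)\neq0$, that Eq.~\eqref{3.02} has at most one solution in $\F_{2^{2m}}$. Building on the reduction already set up (where $\theta=\delta+\gamma+\bar\gamma\notin\F_{2^m}$ and $t:=\theta+\bar\theta\in\F_{2^m}$ is nonzero), I first extract the content of the hypothesis on $i$. Writing $A=\theta+\bar\theta+\bar\lambda\theta$, so that $\bar A=\theta+\bar\theta+\lambda\bar\theta$ is its conjugate, Eq.~\eqref{3.08} becomes $\lambda^{2i-1}=A^{i(2^m-1)}$ in $\F_{2^{2m}}^{*}$. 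Raising this to the $(2^k+1)$-th power and using $(2^k+1)i\equiv1\pmod{2^m+1}$ together with $\lambda^{2^m+1}=1$ and $A^{2^{2m}-1}=1$ collapses the exponents to $\lambda^{1-2^k}=\bar A/A$; raising back to the $i$-th power recovers \eqref{3.08}, since $i$ and $2^k+1$ are mutually inverse modulo $2^m+1$, so the two relations are equivalent. Substituting $A=t+\bar\lambda\theta$ and clearing denominators, this becomes the single equation
\[
\bar\theta\,\lambda^{2^k+1}+t\,\lambda^{2^k}+t\,\lambda+\theta=0,\qquad\lambda\in U.\qquad(\star)
\]
This is the pay-off of the congruence on $i$: the exponent disappears and Case II is governed by one fixed equation.

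I would then argue that it suffices to prove $(\star)$ has at most one root $\lambda\in U$. Every solution $x$ of \eqref{3.02} determines $\lambda=(\bar x+x+\delta)/(x+\gamma)\in U$; in Case II this $\lambda$ satisfies \eqref{3.08}, hence $(\star)$, while in Case I (possible only for odd $m$, and only when $(\bar\theta/\theta)^3=1$) the forced value $\lambda=\bar\theta/\theta$ lies in $U_1$ and, because $k$ is then even, a direct check shows it too satisfies $(\star)$. Conversely a root $\lambda\in U_2$ produces a single $x$ through \eqref{3.06}, and a root $\lambda\in U_1$ produces at most one $x$: among the $2^m$ solutions of \eqref{3.04}, condition \eqref{3.05} forces $(\overline{x+\gamma}/(x+\gamma))^i=\lambda^{2i-1}$, and since $\gcd(i,2^m+1)=1$ this pins the ratio $\overline{x+\gamma}/(x+\gamma)$, hence $x$, uniquely. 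Thus a single root of $(\star)$ yields at most one $x$, and—crucially—bounding the number of roots by one removes any danger of a Case~I solution and a Case~II solution coexisting.

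The heart of the proof is therefore to show that $(\star)$ has at most one root in $U$. I would assume two distinct roots $\lambda_1,\lambda_2\in U$, subtract the two instances of $(\star)$, and expand $\lambda_1^{2^k+1}+\lambda_2^{2^k+1}$ by Lemma~\ref{Lemma2.4}. Writing $s=\lambda_1+\lambda_2\neq0$ and using $\bar\lambda_j=\lambda_j^{-1}$ (so that $\lambda_1\lambda_2=s/\bar s$ and $M:=s\bar s\in\F_{2^m}$), each summand simplifies to $s^{2^k+1}M^{-2^{i}}$, and after dividing by $s$ one is left with a single relation among $s$, $\bar s$, $t$ and $\bar\theta$. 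The goal is to show this relation is incompatible with $\theta\notin\F_{2^m}$, equivalently $\Tr_1^m(\theta\bar\theta/t^2)=1$, the decisive input being $\gcd(2^k+1,2^m+1)=1$, which is exactly what the different parity of $m$ and $k$ guarantees and what makes $(2^k+1)i\equiv1$ solvable in the first place.

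I expect this root‑counting step to be the main obstacle. Equivalently, substituting $\lambda=(\bar\theta+z)/(\theta+z)$ from Lemma~\ref{Lemma2.3} and multiplying through by $(\theta+z)^{2^k+1}$ turns $(\star)$ into the projective‑type polynomial
\[
t\,z^{2^k+1}+B\,z+(t^2+\theta\bar\theta)\,t^{2^k}=0,\qquad z\in\F_{2^m},
\]
with $B=\theta^{2^k}\bar\theta+\bar\theta^{2^k}\theta\in\F_{2^m}$; its number of roots in $\F_{2^m}$ is a priori as large as $2^{\gcd(k,m)}+1$, and forcing this count down to one is precisely where the parity hypothesis must do all the work. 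Once uniqueness of the root is established, combining the two cases gives at most one $x$ for every $\gamma$, and hence $f$ permutes $\F_{2^{2m}}$.
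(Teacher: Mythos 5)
Your setup is correct and, in one respect, tidier than the paper's: your equation $(\star)$ is exactly the paper's Eq.~(\ref{3.09}), your simplification of the summands from Lemma~\ref{Lemma2.4} to $s^{2^k+1}M^{-2^i}$ is right, and your observation that for odd $m$ the Case~I value $\lambda=\bar{\theta}/\theta$ itself satisfies $(\star)$ (using $\lambda^3=1$ and $3\mid 2^k-1$ for even $k$) lets a single ``at most one root of $(\star)$ in $U$'' statement absorb what the paper proves separately, namely that Cases I and II cannot coexist (the paper instead substitutes $\lambda=(\bar{\theta}+z)/(\theta+z)$ into (\ref{3.09}) and shows the Case~I condition forces $z=0$ in (\ref{3.15})). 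However, there is a genuine gap at exactly what you call the heart of the proof: you never prove that $(\star)$ has at most one root in $U$. You set up the relation coming from two hypothetical roots and then only state the goal, conceding that this root-counting step is ``the main obstacle.'' Since everything else is routine once that bound is known, the proposal is missing its essential content.

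Moreover, your two hints for closing the gap point away from the argument that actually works. After adding the two instances of $(\star)$ and dividing by $s^{2^k+1}$, the paper adds the resulting equation (\ref{3.13}) to its $2^m$-th power; this makes $\theta$ disappear entirely (only $t=\theta+\bar{\theta}\neq 0$ is used), leaving the $\F_{2^m}$-equation (\ref{3.14}) in $ab=1/(s\bar{s})$ and $a+b$. Applying $\Tr_1^m$ kills $(a+b)+(a+b)^{2^k}$, converts the sum into $k\,\Tr_1^m(ab)$ modulo $2$, and the crux is the identity $\Tr_1^m\bigl(\lambda_1\lambda_2/(\lambda_1+\lambda_2)^2\bigr)=1$ for any distinct $\lambda_1,\lambda_2\in U$, proved by writing $u=\lambda_1/\lambda_2$, splitting $u/(1+u)^2=1/(1+u)+1/(1+u)^2$, and telescoping the trace using $u^{2^m}=u^{-1}$. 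The total trace is then $\Tr_1^m(1)+k\equiv m+k\pmod 2$, so the contradiction $0=1$ is literally the different-parity hypothesis --- not, as you suggest, the condition $\gcd(2^k+1,2^m+1)=1$, nor incompatibility with $\Tr_1^m(\theta\bar{\theta}/t^2)=1$ (true, but $\theta$ is already gone at this stage). Indeed $\gcd(2^k+1,2^m+1)=1$ also holds for $k=2$ and $4\mid m$, where this trace computation yields $0$ and the argument fails; that same-parity situation is exactly why Theorem~\ref{Thm3.3} needs the extra hypotheses $k\mid m$ with $m/k$ even and a $\Tr_k^m$ variant. Without the trace identity above (or a substitute, e.g.\ a Bluher-type root count excluding two $\F_{2^m}$-roots of your projective polynomial $tz^{2^k+1}+Bz+(t^2+\theta\bar{\theta})t^{2^k}$), your proposal does not establish the theorem.
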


\begin{proof}
We use the notations and arguments given in the above and only consider the case $\Tr_m^{2m}(\delta) \neq 0$. The discussions are divided into two cases according to the parity of $m$ and $k$.

(\textbf{i}) $m$ is even and $k$ is odd.

Note that $U_1=\emptyset$ under the assumption $m$ is even. So we just
need to consider  case II.

\textbf{Case II.} Assume $\lambda \in U_2=U$. Eq. (\ref{3.08}) means
\[\lambda^{1-2i} (\theta+\bar{\theta}+\lambda \bar{\theta})^{i} = (\theta+\bar{\theta}+\bar{\lambda} \theta)^{i}.\]
Raising both sides of the above equation to the $(2^k+1)$-th power and noting that $(2^k+1)i\equiv 1 ~(\textrm{mod}~2^m+1)$, we obtain
\[\lambda^{2^k-1} (\theta+\bar{\theta}+\lambda \bar{\theta}) =¡¡\theta+\bar{\theta}+\bar{\lambda} \theta,\]
which can be simplified as
\begin{eqnarray}
\label{3.09}
\bar{\theta} \lambda^{2^{k}+1} + (\theta+ \bar{\theta}) \lambda^{2^k} + (\theta+\bar{\theta}) \lambda +\theta=0.
\end{eqnarray}
Now we prove that Eq. (\ref{3.09}) has at most one solution in $U_2$. Suppose there exist $\lambda_1,\lambda_2 \in U_2$ with
$\lambda_1 \neq \lambda_2$ such that

\begin{eqnarray}
\label{3.10}
\bar{\theta} {\lambda_1}^{2^{k}+1} + (\theta+ \bar{\theta}) {\lambda_1}^{2^k} + (\theta+\bar{\theta}) \lambda_1 +\theta=0
\end{eqnarray}
and
\begin{eqnarray}
\label{3.11}
\bar{\theta} {\lambda_2}^{2^{k}+1} + (\theta+ \bar{\theta}) {\lambda_2}^{2^k} + (\theta+\bar{\theta}) \lambda_2 +\theta=0.
\end{eqnarray}
Adding Eq. (\ref{3.10}) and Eq. (\ref{3.11}), we get
\begin{eqnarray*}
\bar{\theta} ({\lambda_1}^{2^{k}+1} + {\lambda_2}^{2^{k}+1}) + (\theta+ \bar{\theta}) ({\lambda_1}^{2^k}+{\lambda_2}^{2^k}) + (\theta+\bar{\theta})(\lambda_1 + \lambda_2) =0.
\end{eqnarray*}
By Lemma \ref{Lemma2.4}, we have
\begin{eqnarray}
\label{3.12}
\bar{\theta} \left( (\lambda_1+\lambda_2)^{2^k+1}+\sum\limits_{i=0}^{k-1}(\lambda_1 \lambda_2)^{2^i}(\lambda_1+\lambda_2)^{2^k-2^{i+1}+1} \right)  \nonumber \\
+ (\theta+ \bar{\theta}) ({\lambda_1}^{2^k}+{\lambda_2}^{2^k}) + (\theta+\bar{\theta})(\lambda_1 + \lambda_2) =0.
\end{eqnarray}
Dividing both sides of Eq. (\ref{3.12}) by $(\lambda_1+\lambda_2)^{2^k+1}$, we get
\begin{eqnarray*}
\bar{\theta} \left( 1+\sum\limits_{i=0}^{k-1}\frac{(\lambda_1 \lambda_2)^{2^i}} { (\lambda_1+\lambda_2)^{2^{i+1}}} \right)
+ (\theta+ \bar{\theta}) \left(\frac{1}{{\lambda_1}+{\lambda_2}} + \frac{1}{(\lambda_1 + \lambda_2)^{2^k}} \right)=0.
\end{eqnarray*}
Substituting $a=\frac{1}{\lambda_1+\lambda_2}$ and $b=a^{2^m}=\frac{\lambda_1 \lambda_2}{\lambda_1+\lambda_2}$ into the above equation,
we have
\begin{eqnarray}
\label{3.13}
\bar{\theta} \left( 1+\sum\limits_{i=0}^{k-1} (ab)^{2^i} \right) + (\theta+ \bar{\theta}) \left(a+a^{2^k} \right)=0.
\end{eqnarray}
Note that $a$ and $b$ may  not belong to $\F_{2^m}$, but $a+b$ and $ab$ do. Adding Eq. (\ref{3.13}) and its $2^m$-th power, we get
\begin{eqnarray*}
(\theta + \bar{\theta}) \left( 1+\sum\limits_{i=0}^{k-1} (ab)^{2^i} \right) + (\theta+ \bar{\theta}) \left(a+b+a^{2^k} +b^{2^k}\right)=0.
\end{eqnarray*}
Since $\theta + \bar{\theta} \neq 0$, this equation leads to
\begin{eqnarray}
\label{3.14}
1+\sum\limits_{i=0}^{k-1} (ab)^{2^i}+a+b+a^{2^k} +b^{2^k}=0.
\end{eqnarray}
Applying $\Tr_1^m(\cdot)$ on  both sides of Eq.(\ref{3.14}), we have
\begin{eqnarray*}
0&=& \Tr_1^m\left(1+\sum\limits_{i=0}^{k-1} (ab)^{2^i}+a+b+a^{2^k} +b^{2^k}\right) \\
&=& \Tr_1^m(1)+\Tr_1^m\left( \sum\limits_{i=0}^{k-1} (ab)^{2^i} \right) + \Tr_1^m(a+b) + \Tr_1^m\left(a^{2^k} +b^{2^k} \right) \\
&=& \Tr_1^m(1)+\Tr_1^m\left( \sum\limits_{i=0}^{k-1} (ab)^{2^i} \right) \\
&=& \Tr_1^m\left( ab \right)  \\
&=& \Tr_1^m\left( \frac{\lambda_1 \lambda_2}{(\lambda_1+\lambda_2)^2} \right)\\
&=&\Tr_1^m\left( \frac{\lambda_1/\lambda_2}{1+(\lambda_1/\lambda_2)^2}\right)\\
&=&\Tr_1^m\left( \frac{1}{1+\lambda_1/\lambda_2}+\frac{1}{1+(\lambda_1/\lambda_2)^2} \right)\\
&=&\sum_{i=0}^{m-1}\left[\left(\frac{1}{1+\lambda_1/\lambda_2}\right)^{2^i} + \left(\frac{1}{1+\lambda_1/\lambda_2}\right)^{2^{i+1}}\right]\\
&=&\frac{1}{1+\lambda_1/\lambda_2} + \frac{1}{1+\lambda_2/\lambda_1}\\
&&(\text{this is because}~(\lambda_1/\lambda_2)^{2^m}=\lambda_2/\lambda_1~\text{since}~\lambda_1/\lambda_2\in U)\\
&=& 1.
\end{eqnarray*}
%
%The last equality holds because of the fact that the two solutions of equation $x^2+x+\frac{\lambda_1 \lambda_2}{(\lambda_1+\lambda_2)^2}=0$
%are $\frac{\lambda_1}{\lambda_1+\lambda_2}$ and $\frac{\lambda_2}{\lambda_1+\lambda_2}$, which do not belong to ${\F}_{2^m}$.
%In fact, $\left( \frac{\lambda_1}{\lambda_1+\lambda_2} \right)^{2^m}=\frac{\lambda_1^{-1}}{\lambda_1^{-1}+\lambda_2^{-1}}=\frac{\lambda_2}{\lambda_1+\lambda_2}
%\neq \frac{\lambda_1}{\lambda_1+\lambda_2}$, since $\lambda_1 \neq \lambda_2$. By Lemma \ref{Lemma2.1}, $\Tr_1^m\left( \frac{\lambda_1 \lambda_2}{(\lambda_1+\lambda_2)^2} \right)=1$.
This contradiction implies that Eq. (\ref{3.09})
has at most one solution in $U_2$. Therefore, Eq.(\ref{3.02}) has at most one solution in ${\F}_{2^{2m}}$ for any $\gamma \in {\F}_{2^{2m}}$.

(\textbf{ii}) $m$ is odd and $k$ is even

\textbf{Case I.} Assume $\lambda \in U_1$. From Eq. ({\ref{3.05}}), we can get
\[\lambda^{1-2i}(\bar{x}+\bar{\gamma})^i=(x+\gamma)^i.\]
Since $(2^k+1)i\equiv 1 ~(\textrm{mod}~2^m+1)$, raising to the $(2^k+1)$-th power on both sides leads to
\[\lambda^{2^k-1}(\bar{x}+\bar{\gamma}) = (x+\gamma).\]
Note that $3 \mid (2^k-1)$ ($k$ is even) and $\lambda^3=1$,  the above equation implies  $\bar{x}+x = \bar{\gamma}+\gamma$. Substituting it into Eq .(\ref{3.02}), we get a unique solution
\[x=(\bar{\gamma}+\gamma+\delta)^{(2^m-1)i+1}+\gamma\]
of Eq .(\ref{3.02}). From the discussions before, we note that $\lambda=\frac{\bar{\theta}}{\theta} \in U_1$ is unique.

\textbf{Case II.}
Similar to the proof process for Case II of (\textbf{i}), we can also get  Eq. (\ref{3.14})
\[1+\sum\limits_{i=0}^{k-1} (ab)^{2^i}+a+b+a^{2^k} +b^{2^k}=0.\]
Applying $\Tr_1^m(\cdot)$ on both sides of it, we get
\begin{eqnarray*}
0&=& \Tr_1^m\left(1+\sum\limits_{i=0}^{k-1} (ab)^{2^i}+a+b+a^{2^k} +b^{2^k}\right) \\
&=& \Tr_1^m(1)+\Tr_1^m\left( \sum\limits_{i=0}^{k-1} (ab)^{2^i} \right) + \Tr_1^m(a+b) + \Tr_1^m\left(a^{2^k} +b^{2^k} \right) \\
&=& \Tr_1^m(1)+\Tr_1^m\left( \sum\limits_{i=0}^{k-1} (ab)^{2^i} \right) \\
&=& \Tr_1^m(1) \\
&=& 1.
\end{eqnarray*}
This contradiction implies Eq. (\ref{3.09}) has at most one solution in $U_2$.

We claim that Case I and  Case II can not hold simultaneously.

Similar to Case II of (\textbf{i}), we can get Eq. (\ref{3.09})
\[\bar{\theta} \lambda^{2^{k}+1} + (\theta+ \bar{\theta}) \lambda^{2^k} + (\theta+\bar{\theta}) \lambda +\theta=0.\]
$\lambda =1$ is not a solution of the above equation, since $\theta+\bar{\theta} \neq 0$. By Lemma \ref{Lemma2.3}, each
$\lambda \in U_2 \backslash \{1\}$ can be uniquely expressed as $\lambda=\frac{z+\bar{\theta}}{z+\theta}$ for some $z \in{\F}_{2^{2m}}$.
Substituting it into Eq. (\ref{3.09}) and simplifying, we can get
\begin{eqnarray}
\label{3.15}
(\theta+\bar{\theta})z^{2^k+1}  + (\theta^{2^k} \bar{\theta} + \bar{\theta}^{2^k} \theta) z +(\theta^{2^k}+\bar{\theta}^{2^k})(\theta^2+\theta \bar{\theta} +\bar{\theta}^2)=0.
\end{eqnarray}
Note that Case I holds meaning that $\theta^2+\theta \bar{\theta} +\bar{\theta}^2=0$, further, $\theta^3 +\bar{\theta}^3=0$.
Therefore, $(\theta^{2^k}+\bar{\theta}^{2^k})(\theta^2+\theta \bar{\theta} +\bar{\theta}^2)=0$ and
$\theta^{2^k} \bar{\theta} + \bar{\theta}^{2^k} \theta=\theta \bar{\theta}(\theta^{2^k-1}+ \bar{\theta}^{2^k-1})=
\theta \bar{\theta}\left(\theta^{3 \cdot \frac{2^k-1}{3}}+ \bar{\theta}^{2^k-1}\right)=\theta \bar{\theta}\left(\bar{\theta}^{3 \cdot \frac{2^k-1}{3}}+ \bar{\theta}^{2^k-1}\right)=0$.
Consequently, Eq. (\ref{3.15}) has a unique solution $z=0$, then $\lambda= \frac{\bar{\theta}+z}{\theta+z} =\frac{\bar{\theta}}{\theta} \in U_1$,  which contradicts the
assumption $\lambda \in U_2$.

Summarizing the discussion of (\textbf{ii}),  Eq. (\ref{3.02}) has at most one solution.

Combining (\textbf{i}) with (\textbf{ii}), the proof is finished.
\end{proof}

\begin{Thm}
\label{Thm3.2}
Let $m,t,i,j$ be four positive integers, $1 \leq i,j \leq 2^m$,  $t\cdot i\equiv 1 ~(mod~2^m+1)$ and $t\cdot j \equiv t-1 ~(mod~2^m+1) $.
For any $\delta \in \F_{2^{2m}}$, the polynomial
\[f(x)=(x^{2^m}+x+\delta)^{(2^m-1)i+1}+x\]
permutes $\F_{2^{2m}}$ if and only if
\[g(x)=(x^{2^m}+x+\delta)^{(2^m-1)j+1}+x\]
does.
\end{Thm}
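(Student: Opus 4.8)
The plan is to exhibit an explicit bijection of $\F_{2^{2m}}$ that intertwines $f$ and $g$, so that one permutes exactly when the other does. The natural candidate is the Frobenius conjugation $\sigma:x\mapsto\bar x=x^{2^m}$, which is a field automorphism (hence a bijection) and an involution. Concretely, I aim to prove the identity $g(\bar x)=\overline{f(x)}$ for every $x\in\F_{2^{2m}}$, i.e. $g\circ\sigma=\sigma\circ f$. Granting this, $g=\sigma\circ f\circ\sigma$, and since $\sigma$ is bijective, $g$ is a bijection if and only if $f$ is; this is precisely the assertion.

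First I would distill the hypotheses into the single relation $i+j\equiv 1~(\textrm{mod}~2^m+1)$. Adding $t\cdot i\equiv 1$ and $t\cdot j\equiv t-1~(\textrm{mod}~2^m+1)$ gives $t(i+j)\equiv t$, and since $t\cdot i\equiv 1$ forces $t$ to be a unit modulo $2^m+1$, cancelling $t$ yields $i+j\equiv 1~(\textrm{mod}~2^m+1)$. The auxiliary integer $t$ then plays no further role; it merely records the two cases $t=2^k+1$ appearing in the main theorem.

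The heart of the matter is the pointwise identity
\[\bar w^{\,s_i}=w^{\,s_j}\qquad\textrm{for all } w\in\F_{2^{2m}},\]
where $s_i=i(2^m-1)+1$ and $s_j=j(2^m-1)+1$. Since $\bar w^{\,s_i}=w^{\,s_i2^m}$, this amounts to the exponent congruence $s_i2^m\equiv s_j~(\textrm{mod}~2^{2m}-1)$. A short reduction gives $s_i2^m-s_j\equiv(i+j-1)(1-2^m)~(\textrm{mod}~2^{2m}-1)$, and because $i+j-1$ is divisible by $2^m+1$ while $1-2^m$ carries the factor $2^m-1$, the product is divisible by $(2^m+1)(2^m-1)=2^{2m}-1$; hence the congruence holds. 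Alternatively, writing $w=r\omega$ with $r\in\F_{2^m}^{*}$ and $\omega\in U$, one computes $\bar w^{\,s_i}=r\omega^{2i-1}$ and $w^{\,s_j}=r\omega^{1-2j}$, and $2i-1\equiv 1-2j~(\textrm{mod}~2^m+1)$ gives the identity at once.

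Finally I would assemble the intertwining. Setting $w=\bar x+x+\delta$, one has $\overline{f(x)}=\overline{w^{s_i}+x}=\bar w^{\,s_i}+\bar x$, while $g(\bar x)=(\overline{\bar x}+\bar x+\delta)^{s_j}+\bar x=w^{\,s_j}+\bar x$; the identity above makes these equal, proving $g\circ\sigma=\sigma\circ f$. I do not expect a genuine obstacle once the identity is secured: the only conceptual step is recognising that conjugation is the right bijection, and the only computational care needed is the bookkeeping of exponents modulo $2^{2m}-1$. It remains to note the degenerate case $\Tr_m^{2m}(\delta)=0$, where (as recalled before the theorem) $f$ and $g$ permute for any exponent, so the equivalence holds trivially there as well.
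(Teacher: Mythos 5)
Your proposal is correct and takes essentially the same route as the paper: both reduce the hypotheses to $i+j\equiv 1 \pmod{2^m+1}$ (using that $t$ is a unit modulo $2^m+1$) and then establish the Frobenius intertwining $g(\bar x)=\overline{f(x)}$ --- which the paper writes as $\left(f(\bar x)\right)^{2^m}=g(x)$ --- by reducing the exponents modulo $2^{2m}-1$. The only cosmetic difference is that the paper uses the bounds $1\leq i,j\leq 2^m$ to pin down $i+j=2^m+2$ exactly, whereas you work directly with the congruence, which suffices.
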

\begin{proof}
Adding $t\cdot i\equiv 1 ~(\textrm{mod}~2^m+1)$ and $t\cdot j \equiv t-1 ~(\textrm{mod}~2^m+1) $,
we get $t(i+j-1)\equiv 0~(\textrm{mod}~2^m+1)$. Note that $\gcd(t,2^m+1)=1$, otherwise, the congruence
equation $t\cdot i\equiv 1 ~(\textrm{mod}~2^m+1)$ has no solution about variable $i$. Therefore,
$2^m+1 \mid i+j-1$, and from $1 \leq i,j \leq 2^m$, we get $i+j=2^m+2$.
\begin{eqnarray*}
 f(x)&=&(\bar{x}+x+\delta)^{(2^m-1)i+1}+x  ~~\textrm{permutes}~~ {\F}_{2^{2m}}  \\
\Leftrightarrow \left(f(\bar{x})\right)^{2^m} &= & (\bar{x}+x+\delta)^{((2^m-1)i+1)2^m}+x   \\
&=& (\bar{x}+x+\delta)^{(2^m-1)(1-i)+1}+x  \\
&=& (\bar{x}+x+\delta)^{(2^m-1)(2^m+2-i)+1}+x  \\
&=& g(x)~~\textrm{permutes}~~ {\F}_{2^{2m}}.
\end{eqnarray*}

The proof is completed.
\end{proof}

\begin{Remark}
\label{Remark3.1} 
Let $m$ be a positive integer and $t=2^k+1$, where $1 \leq k \leq m-1$. 
Let $i$ and $j$ $(1 \leq i,j \leq 2^m)$ be two positive integers satisfying 
$(2^k+1) i\equiv 1 ~(mod~2^m+1)$ and $(2^k+1) j \equiv 2^k ~(mod~2^m+1) $. By Theorem \ref{Thm3.2},
for any $\delta \in \F_{2^{2m}}$, the polynomial
\[f(x)=(x^{2^m}+x+\delta)^{(2^m-1)i+1}+x\]
permutes $\F_{2^{2m}}$ if and only if
\[g(x)=(x^{2^m}+x+\delta)^{(2^m-1)j+1}+x\]
does.
\end{Remark}

As immediate consequences of Theorem \ref{Thm3.1} and Remark \ref{Remark3.1}, we get the following five corollaries.

\begin{Corollary}
\label{Corollary3.1}
Let $m$ be a positive integer. For any $\delta \in {\F}_{2^{2m}}$, both the polynomials
\[f(x)=(x^{2^m}+x+\delta)^{2(2^m-1)+1}+x\]
and
\[g(x)=(x^{2^m}+x+\delta)^{2^m(2^m-1)+1}+x\]
permute ${\F}_{2^{2m}}$.
\end{Corollary}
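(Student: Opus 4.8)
The plan is to instantiate Theorem~\ref{Thm3.1} and Remark~\ref{Remark3.1} at the single choice $k=m-1$, which converts the general sufficient condition into a statement about the two concrete exponents $2(2^m-1)+1$ and $2^m(2^m-1)+1$.

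First I would dispose of $f$. Taking $i=2$ and $k=m-1$, observe that $m$ and $k$ are consecutive integers, hence always of opposite parity, so the parity hypothesis of Theorem~\ref{Thm3.1} is automatically satisfied (and $1\le k\le m-1$ holds once $m\ge 2$). It then suffices to verify the defining congruence $(2^{k}+1)i\equiv 1\pmod{2^m+1}$, which is immediate from $2^m\equiv -1\pmod{2^m+1}$:
\[
(2^{m-1}+1)\cdot 2=2^m+2\equiv 1\pmod{2^m+1}.
\]
Thus Theorem~\ref{Thm3.1} applies directly and shows that $f(x)=(x^{2^m}+x+\delta)^{2(2^m-1)+1}+x$ permutes $\F_{2^{2m}}$ for every $\delta$.

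Next I would obtain the statement for $g$ from Remark~\ref{Remark3.1}, not from Theorem~\ref{Thm3.1} directly, since the exponent $j=2^m$ does not solve $(2^k+1)j\equiv 1$. Keeping $t=2^k+1=2^{m-1}+1$, I would check that $j=2^m$ is exactly the companion exponent, i.e. that $(2^{m-1}+1)\cdot 2^m\equiv 2^{m-1}\pmod{2^m+1}$; using $2^m\equiv -1$ this is
\[
(2^{m-1}+1)\,2^m=2^{2m-1}+2^m\equiv -2^{m-1}-1\equiv 2^{m-1}\pmod{2^m+1}.
\]
Since moreover $i=2$ and $j=2^m$ satisfy $i+j=2^m+2$, they form precisely the pair of Remark~\ref{Remark3.1}, so $f$ permutes $\F_{2^{2m}}$ if and only if $g$ does; combined with the previous paragraph this yields the permutation property of $g$.

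I expect no real obstacle beyond the elementary modular bookkeeping: the whole point is the single observation that $k=m-1$ simultaneously forces opposite parity and makes $i=2$ satisfy the governing congruence, after which Theorem~\ref{Thm3.1} and Remark~\ref{Remark3.1} supply everything. The only place needing a word of care is the degenerate case $m=1$, where the range $1\le k\le m-1$ is empty; there one notes that $f=g$ with exponent $3$, and that either $\Tr_m^{2m}(\delta)=0$ (already handled before Theorem~\ref{Thm3.1}) or $\Tr_m^{2m}(\delta)\ne 0$, in which case $x^{2}+x+\delta\ne 0$ for all $x\in\F_4$, so both polynomials collapse to the affine map $x\mapsto x+1$, which is trivially a permutation.
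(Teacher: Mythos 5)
Your proposal is correct and follows exactly the paper's route: set $k=m-1$ so the parities of $k$ and $m$ automatically differ, verify $(2^{m-1}+1)\cdot 2\equiv 1\pmod{2^m+1}$ to apply Theorem~\ref{Thm3.1} to $f$, and verify $(2^{m-1}+1)\cdot 2^m\equiv 2^{m-1}\pmod{2^m+1}$ to transfer the conclusion to $g$ via Remark~\ref{Remark3.1}. Your explicit treatment of the degenerate case $m=1$ (where the range $1\le k\le m-1$ is empty and both exponents equal $3$) is a small but genuine improvement, since the paper's proof with $k=m-1$ tacitly assumes $m\ge 2$.
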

\begin{proof}
Set $k=m-1$ in Theorem \ref{Thm3.1}. Then $(2^k+1)\cdot 2 \equiv 1~(\textrm{mod}~2^m+1)$, and the parity of $k=m-1$ and $m$ is
obviously different. By Theorem \ref{Thm3.1}, $f(x)$ permutes ${\F}_{2^{2m}}$.

Note that when $k=m-1$, $(2^k+1)\cdot 2^m \equiv 2^k~(\textrm{mod}~2^m+1)$. By the fact that $f(x)$ permutes ${\F}_{2^{2m}}$
and Remark \ref{Remark3.1},  $g(x)$ permutes ${\F}_{2^{2m}}$.
\end{proof}

The results in Corollary \ref{Corollary3.1} were presented in \cite[Theorem 1]{Tu1} and \cite[Theorem3.2(2)]{Wang}.

\begin{Corollary}
\label{Corollary3.2}
Let $m$ be an even positive integer. For any $\delta \in {\F}_{2^{2m}}$, both the polynomials
\[f(x)=(x^{2^m}+x+\delta)^{\frac{2^m+2}{3}(2^m-1)+1}+x\]
and
\[g(x)=(x^{2^m}+x+\delta)^{\frac{2(2^m+2)}{3}(2^m-1)+1}+x\]
permute ${\F}_{2^{2m}}$.
\end{Corollary}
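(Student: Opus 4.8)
The plan is to derive both statements as a direct specialization of Theorem~\ref{Thm3.1} and Remark~\ref{Remark3.1}, taking the parameter $k=1$. Before anything else I would record the two divisibility facts that make the exponents well defined: since $m$ is even we have $2^m \equiv 1 \pmod{3}$, hence $2^m+2 \equiv 0 \pmod{3}$, so that $i:=\frac{2^m+2}{3}$ is a genuine positive integer, and the same congruence shows $j:=\frac{2(2^m+2)}{3}$ is an integer as well. This is precisely the place where the hypothesis ``$m$ even'' is used in an essential way.

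For the polynomial $f$, the key computation is that $2^k+1=3$ when $k=1$, so that
$(2^1+1)\,i = 3\cdot\frac{2^m+2}{3} = 2^m+2 \equiv 1 ~(\textrm{mod}~2^m+1)$,
using $2^m \equiv -1 ~(\textrm{mod}~2^m+1)$. Thus $i$ satisfies exactly the congruence $(2^k+1)i \equiv 1 ~(\textrm{mod}~2^m+1)$ required in Theorem~\ref{Thm3.1}. Since $k=1$ is odd while $m$ is even, the integers $m$ and $k$ have different parity (and $1\le k \le m-1$ holds because $m\ge 2$), so Theorem~\ref{Thm3.1} applies verbatim and yields that $f(x)$ permutes $\F_{2^{2m}}$.

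For $g$, I would invoke Remark~\ref{Remark3.1} with the same choice $k=1$. Here one checks that
$(2^1+1)\,j = 2(2^m+2) = 2^{m+1}+4 \equiv 2 = 2^1 ~(\textrm{mod}~2^m+1)$,
so that $j$ is precisely the exponent paired with $i$ in that remark (consistently, $i+j = \frac{3(2^m+2)}{3} = 2^m+2$, matching the relation $i+j=2^m+2$ established in the proof of Theorem~\ref{Thm3.2}). The one hypothesis that must be verified before applying the remark is the range condition $1\le i,j \le 2^m$ of Theorem~\ref{Thm3.2}: the bound $i\le 2^m$ is immediate, while $j = \frac{2^{m+1}+4}{3} \le 2^m$ is equivalent to $4 \le 2^m$, which holds since $m\ge 2$. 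Granting this, Remark~\ref{Remark3.1} asserts that $g$ permutes if and only if $f$ does, and since $f$ permutes by the previous paragraph, $g$ permutes as well.

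The argument is essentially a verification rather than a construction, so I do not expect a serious obstacle; the only genuinely substantive inputs are the two divisibility-by-$3$ facts (which force $m$ even) and the checking of the range bound $1\le j\le 2^m$ that legitimizes the use of Remark~\ref{Remark3.1}. The most error-prone bookkeeping is confirming that the pair $(i,j)$ matches the pairing $(2^k+1)i\equiv 1$, $(2^k+1)j\equiv 2^k ~(\textrm{mod}~2^m+1)$ of the remark with $k=1$, which I would settle by the explicit congruence computations above.
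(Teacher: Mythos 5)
Your proposal is correct and follows exactly the paper's own route: specializing $k=1$ in Theorem~\ref{Thm3.1} to handle $f$ and then invoking Remark~\ref{Remark3.1} for $g$, with the same congruence checks $(2^1+1)\cdot\frac{2^m+2}{3}\equiv 1$ and $(2^1+1)\cdot\frac{2(2^m+2)}{3}\equiv 2 \pmod{2^m+1}$. Your extra verifications (divisibility by $3$ forced by $m$ even, the parity condition, and the range bound $1\le i,j\le 2^m$ needed for Remark~\ref{Remark3.1}) are details the paper leaves implicit, and they are all accurate.
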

\begin{proof}
We take $k=1$ in Theorem \ref{Thm3.1}, then $(2^k+1)\cdot \frac{2^m+2}{3} \equiv 1~(\textrm{mod}~2^m+1)$, and $k=1$ and $m$
have different parity. By Theorem \ref{Thm3.1}, $f(x)$ permutes ${\F}_{2^{2m}}$.

Note that when $k=1$, $ (2^k+1)\cdot \frac{2(2^m+2)}{3} \equiv 2~(\textrm{mod}~2^m+1)$. From the fact that $f(x)$ permutes ${\F}_{2^{2m}}$
and Remark \ref{Remark3.1}, we get that $g(x)$ permutes ${\F}_{2^{2m}}$.
\end{proof}

The results in Corollary \ref{Corollary3.2} were presented in \cite[Theorem 1]{Zha2} and \cite[Theorem3.2(3)]{Wang}.

\begin{Corollary}
\label{Corollary3.3}
Let $m$ be an odd positive integer, $i$ and $j$ be two positive integers satisfying
$5i \equiv 1 ~({mod}~2^m+1)$ and $5j \equiv  4~({mod}~2^m+1)$, respectively.
For any $\delta \in {\F}_{2^{2m}}$,  both the polynomials
\[f(x)=(x^{2^m}+x+\delta)^{(2^m-1)i+1}+x\]
and
\[g(x)=(x^{2^m}+x+\delta)^{(2^m-1)j+1}+x\]
permute ${\F}_{2^{2m}}$.
\end{Corollary}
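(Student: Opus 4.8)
The plan is to recognize this statement as the special case $k=2$ of Theorem \ref{Thm3.1} combined with Remark \ref{Remark3.1}, exactly paralleling the proofs of Corollaries \ref{Corollary3.1} and \ref{Corollary3.2}. First I would set $k=2$, so that $2^k+1=5$ and $2^k=4$. The hypothesis $5i \equiv 1~(\textrm{mod}~2^m+1)$ then reads $(2^k+1)i \equiv 1~(\textrm{mod}~2^m+1)$, which is precisely the congruence appearing in Theorem \ref{Thm3.1}.

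Next I would check the parity and range conditions of Theorem \ref{Thm3.1}. Since $m$ is odd and $k=2$ is even, $m$ and $k$ have different parity. The application of Theorem \ref{Thm3.1} with $k=2$ also requires $k \leq m-1$, i.e. $m \geq 3$, which holds for every odd $m \geq 3$. With these conditions verified, Theorem \ref{Thm3.1} gives directly that $f(x)$ permutes $\F_{2^{2m}}$.

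For $g(x)$ I would then invoke Remark \ref{Remark3.1} with the same value $k=2$. The hypothesis $5j \equiv 4~(\textrm{mod}~2^m+1)$ is exactly the congruence $(2^k+1)j \equiv 2^k~(\textrm{mod}~2^m+1)$, so Remark \ref{Remark3.1} guarantees that $g(x)$ permutes $\F_{2^{2m}}$ if and only if $f(x)$ does. Since the preceding step establishes that $f(x)$ permutes $\F_{2^{2m}}$, I conclude that $g(x)$ also permutes $\F_{2^{2m}}$.

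I do not expect a genuine obstacle, as the corollary is a direct specialization of the general framework already developed. The only points needing a moment of care are confirming that $k=2 \leq m-1$ so that Theorem \ref{Thm3.1} applies, and matching the two congruence hypotheses to the templates $(2^k+1)i \equiv 1$ and $(2^k+1)j \equiv 2^k$ under the substitution $k=2$; once these are in place the result follows immediately.
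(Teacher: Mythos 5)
Your proposal is correct and matches the paper's own proof essentially verbatim: the paper likewise sets $k=2$ in Theorem \ref{Thm3.1}, notes the differing parity of $k$ and $m$ to conclude $f(x)$ permutes $\F_{2^{2m}}$, and then applies Remark \ref{Remark3.1} with $(2^k+1)j \equiv 2^k~(\textrm{mod}~2^m+1)$ to handle $g(x)$. Your extra remark that one needs $k=2 \leq m-1$ (i.e.\ $m \geq 3$) is a point of care the paper passes over silently, but otherwise there is no difference.
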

\begin{proof}
Set $k=2$ in Theorem \ref{Thm3.1}, and note that the parity of $k$ and $m$ is different. By Theorem \ref{Thm3.1},
$f(x)$ permutes ${\F}_{2^{2m}}$ when $5i \equiv 1 ~({mod}~2^m+1)$. Further, by Remark \ref{Remark3.1},
$g(x)$ permutes ${\F}_{2^{2m}}$ when $5j \equiv 4 ~({mod}~2^m+1)$.
\end{proof}

The results in Corollary \ref{Corollary3.3} were also presented in \cite[Theorem 3.1 and 3.4]{Gupta}.
In fact, the explicit expressions of $i$ and $j$ were obtained in \cite{Gupta}, when $m \equiv  1~(\textrm{mod}~4)$,
$i=\frac{3 \cdot 2^m +4}{5}, j=\frac{2^{m+1}+6}{5}$; when $m \equiv  3~(\textrm{mod}~4)$,
$i=\frac{ 2^m +2}{5}, j=\frac{2^{m+2}+8}{5}$.

\begin{Corollary}
\label{Corollary3.4}
Let $m$ be an even positive integer. For any $\delta \in {\F}_{2^{2m}}$,  both the polynomials
\[f(x)=(x^{2^m}+x+\delta)^{(2^m-1)i+1}+x\]
and
\[g(x)=(x^{2^m}+x+\delta)^{(2^m-1)j+1}+x\]
permute ${\F}_{2^{2m}}$ in either of the following three cases:\\
(\textrm{1}) $m \equiv 0 ~({mod}~3)$ ~~$i=\frac{2^{m+2}+5}{9}$,  $j=\frac{5 \cdot 2^{m}+13}{9}$;\\
(\textrm{2}) $m \equiv 1 ~({mod}~3)$ ~~$i=\frac{2^{m}+2}{9}$,  $j=\frac{ 2^{m+3}+16}{9}$;\\
(\textrm{3}) $m \equiv 2 ~({mod}~3)$ ~~$i=\frac{7 \cdot 2^{m}+8}{9}$,  $j=\frac{ 2^{m+1}+10}{9}$.
\end{Corollary}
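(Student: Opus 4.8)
The plan is to invoke Theorem \ref{Thm3.1} and Remark \ref{Remark3.1} with the single choice $k=3$, so that $2^k+1=9$. Since $m$ is even and $k=3$ is odd, $m$ and $k$ have different parity, so the hypothesis of Theorem \ref{Thm3.1} holds throughout. The three cases of the corollary are then nothing but the three residue classes of $m$ modulo $3$, which (together with $m$ being even) are exactly what control the integrality of the stated $i$ and $j$.

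First I would verify, in each case, that the given $i$ satisfies $9i\equiv 1~(\textrm{mod}~2^m+1)$. This is immediate from $2^m\equiv -1~(\textrm{mod}~2^m+1)$: reducing the numerator $9i$ modulo $2^m+1$ gives $1$ in all three cases (for instance in case (1), $9i=2^{m+2}+5\equiv 4\cdot(-1)+5=1$, and likewise $9i\equiv 1$ in cases (2) and (3)). By Theorem \ref{Thm3.1}, $f(x)$ then permutes $\F_{2^{2m}}$.

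Next, for $g(x)$ I would check that the given $j$ satisfies $9j\equiv 2^3=8~(\textrm{mod}~2^m+1)$, again using $2^m\equiv -1~(\textrm{mod}~2^m+1)$ (for instance in case (1), $9j=5\cdot 2^m+13\equiv -5+13=8$). This is precisely the condition $(2^k+1)j\equiv 2^k~(\textrm{mod}~2^m+1)$ appearing in Remark \ref{Remark3.1} with $k=3$. Hence, by Remark \ref{Remark3.1}, $g(x)$ permutes $\F_{2^{2m}}$ if and only if $f(x)$ does, and since $f(x)$ does, so does $g(x)$.

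The only remaining, purely arithmetic, point—and the place where the split into three cases genuinely enters—is to confirm that $i$ and $j$ are indeed positive integers with $1\le i,j\le 2^m$. For this I would use that the order of $2$ modulo $9$ is $6$, so that $m$ even together with $m\equiv r~(\textrm{mod}~3)$ pins down $2^m~(\textrm{mod}~9)$, namely $2^m\equiv 1,7,4$ for $r=0,1,2$ respectively; this is exactly what forces $9$ to divide each numerator (e.g. in case (1), $2^{m+2}+5\equiv 4+5\equiv 0$ and $5\cdot 2^m+13\equiv 5+13\equiv 0~(\textrm{mod}~9)$), while the range bound $1\le i,j\le 2^m$ follows by a direct size estimate of the numerators against $9\cdot 2^m$. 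I expect this integrality bookkeeping to be the only (mild) obstacle, the permutation property itself being an immediate consequence of Theorem \ref{Thm3.1} and Remark \ref{Remark3.1}.
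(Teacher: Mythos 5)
Your proposal is correct and follows exactly the paper's own route: take $k=3$ in Theorem \ref{Thm3.1} (parity of $k$ and $m$ differs since $m$ is even), verify $9i\equiv 1$ and $9j\equiv 8 \pmod{2^m+1}$ using $2^m\equiv -1$, and conclude via Remark \ref{Remark3.1}. Your additional check that $9$ divides the stated numerators (via the order of $2$ modulo $9$ being $6$) is sound arithmetic that the paper leaves implicit, so your write-up is if anything slightly more complete.
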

\begin{proof}
Note that in either of the above three cases, $9i \equiv 1 ~(\textrm{mod}~2^m+1)$ and $9j \equiv 8 ~(\textrm{mod}~2^m+1)$.
Take $k=3$, and note that the parity of $k$ and $m$ is different. By Theorem \ref{Thm3.1} and Remark \ref{Remark3.1}, the conclusions hold.
\end{proof}

\begin{Corollary}
\label{Corollary3.5}
Let $m$ be an odd positive integer. For any $\delta \in {\F}_{2^{2m}}$,  both the polynomials
\[f(x)=(x^{2^m}+x+\delta)^{(2^m-1)i+1}+x\]
and
\[g(x)=(x^{2^m}+x+\delta)^{(2^m-1)j+1}+x\]
permute ${\F}_{2^{2m}}$ in either of the following four cases:\\
(\textrm{1}) $m \equiv 1 ~({mod}~8)$ ~~$i=\frac{11\cdot 2^{m}+12}{17}$,  $j=\frac{3 \cdot 2^{m+1}+22}{17}$;\\
(\textrm{2}) $m \equiv 3 ~({mod}~8)$ ~~$i=\frac{15\cdot 2^{m}+16}{17}$,  $j=\frac{ 2^{m+1}+18}{17}$;\\
(\textrm{3}) $m \equiv 5 ~({mod}~8)$ ~~$i=\frac{2^{m}+2}{17}$,  $j=\frac{ 2^{m+4}+32}{17}$;\\
(\textrm{4}) $m \equiv 7 ~({mod}~8)$ ~~$i=\frac{5 \cdot 2^{m}+6}{17}$,  $j=\frac{ 3\cdot 2^{m+2}+28}{17}$.
\end{Corollary}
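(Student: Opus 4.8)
The plan is to specialize Theorem \ref{Thm3.1} and Remark \ref{Remark3.1} to $k=4$, exactly as in the proofs of Corollaries \ref{Corollary3.1}--\ref{Corollary3.4}. Setting $k=4$ gives $t=2^k+1=17$ and $2^k=16$, so the governing congruences of Theorem \ref{Thm3.1} and Remark \ref{Remark3.1} become $17i\equiv 1\pmod{2^m+1}$ and $17j\equiv 16\pmod{2^m+1}$. Since $m$ is odd and $k=4$ is even, the parity of $m$ and $k$ differs, which is precisely the hypothesis of Theorem \ref{Thm3.1}. Hence, once I verify that the stated $i$ satisfies $17i\equiv 1\pmod{2^m+1}$, Theorem \ref{Thm3.1} immediately yields that $f(x)$ permutes $\F_{2^{2m}}$, and Remark \ref{Remark3.1} transfers this to $g(x)$ provided $17j\equiv 16\pmod{2^m+1}$. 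The four listed residue classes $m\equiv 1,3,5,7\pmod 8$ exhaust all odd $m$, so it remains only to confirm the two congruences (and integrality and the range $1\le i,j\le 2^m$) in each case.

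The congruence part is uniform and short because it uses only $2^m\equiv -1\pmod{2^m+1}$, independently of $m\bmod 8$. For instance, in case (1) one has $17i=11\cdot 2^m+12\equiv 11\cdot(-1)+12=1\pmod{2^m+1}$ and $17j=3\cdot 2^{m+1}+22=6\cdot 2^m+22\equiv 6\cdot(-1)+22=16\pmod{2^m+1}$; the same single substitution handles cases (2), (3), (4) verbatim. The range condition $1\le i,j\le 2^m$ is read off directly from the explicit fractions, since each numerator is a positive multiple of $2^m$ with coefficient strictly between $0$ and $17$.

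The one place where $m\bmod 8$ genuinely enters is the integrality check $17\mid(\text{numerator})$, and this is where I expect the (purely mechanical) bookkeeping to live. The key observation is that the multiplicative order of $2$ modulo $17$ is $8$, because $2^4\equiv -1$ and hence $2^8\equiv 1\pmod{17}$; consequently $2^m\bmod 17$ is determined by $m\bmod 8$, namely $2^m\equiv 2,8,15,9\pmod{17}$ for $m\equiv 1,3,5,7\pmod 8$ respectively. Plugging these into the numerators gives, e.g., $11\cdot 2+12\equiv 0$, $15\cdot 8+16\equiv 0$, $2^m+2\equiv 15+2\equiv 0$, and $5\cdot 9+6\equiv 0\pmod{17}$ for the four values of $i$, with the analogous vanishing for the four values of $j$. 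I anticipate no conceptual obstacle; the only nuisance is running the same order-$8$ reduction through four separate residue classes, after which Theorem \ref{Thm3.1} and Remark \ref{Remark3.1} close the argument.
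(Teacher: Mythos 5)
Your proposal is correct and follows exactly the paper's own route: take $k=4$ in Theorem \ref{Thm3.1} (parity of $m$ odd and $k=4$ even differs) to get $f$, then transfer to $g$ via Remark \ref{Remark3.1}, after checking $17i\equiv 1$ and $17j\equiv 16 \pmod{2^m+1}$. Your explicit verification of integrality via the order of $2$ modulo $17$ being $8$ (whence the four residue classes mod $8$) and of the range $1\le i,j\le 2^m$ merely fills in bookkeeping the paper states with ``Note that,'' and all your arithmetic checks out.
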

\begin{proof}
Note that in either of the above four cases, $17i \equiv 1 ~(\textrm{mod}~2^m+1)$ and $17j \equiv 16 ~(\textrm{mod}~2^m+1)$.
Take $k=4$ and note that the parity of $k$ and $m$ is different. By Theorem \ref{Thm3.1} and Remark \ref{Remark3.1}, the conclusions hold.
\end{proof}

\subsection{The parity of $m$ and $k$ is same}
When $m$ and $k$ are both  odd integers,  the congruence equation $(2^k+1)\cdot i\equiv 1 ~(\textrm{mod}~2^m+1)$
cannot hold for any $i$, since $\gcd(2^k+1,2^m+1) \neq 1$. Therefore, we  only need to consider the case of both $m$ and $k$
are even.

\begin{Thm}
\label{Thm3.3}
Let $m,k$ be two even positive integers with $k\mid m$ and $\frac{m}{k}$ is even.  For any $\delta \in \F_{2^{2m}}$, the polynomial
\[f(x)=(x^{2^m}+x+\delta)^{(2^m-1)i+1}+x\]
permutes $\F_{2^{2m}}$, where $i$ satisfies $(2^k+1)i\equiv 1 ~(mod~2^m+1)$.
\end{Thm}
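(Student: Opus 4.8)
The plan is to run the same reduction used in Theorem \ref{Thm3.1}(i), since the hypothesis that $m$ is even already forces $U_1=\emptyset$; hence only Case II occurs and I never have to deal with the cube-root-of-unity subtlety. Concretely, I would assume $\Tr_m^{2m}(\delta)\neq 0$, and for a fixed $\gamma$ suppose toward a contradiction that Eq. (\ref{3.09}) has two distinct solutions $\lambda_1,\lambda_2\in U_2=U$. Adding the two instances of (\ref{3.09}), applying Lemma \ref{Lemma2.4}, dividing by $(\lambda_1+\lambda_2)^{2^k+1}$ and setting $a=\frac{1}{\lambda_1+\lambda_2}$, $b=a^{2^m}=\frac{\lambda_1\lambda_2}{\lambda_1+\lambda_2}$, I arrive at exactly Eq. (\ref{3.14}),
\[1+\sum_{i=0}^{k-1}(ab)^{2^i}+a+b+a^{2^k}+b^{2^k}=0,\]
together with the independent evaluation $\Tr_1^m(ab)=1$ coming from the telescoping chain at the end of Theorem \ref{Thm3.1}(i); that chain uses only $\lambda_1/\lambda_2\in U$ and is insensitive to the parity of $k$, so it remains valid verbatim here.

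The crucial difference, and the main obstacle, is that the trace step that closed Theorem \ref{Thm3.1}(i) collapses: applying $\Tr_1^m$ to (\ref{3.14}) produces the term $k\,\Tr_1^m(ab)$, which vanishes identically because $k$ is even, so the direct trace yields the useless identity $0=0$. The remedy I would use is to replace the full trace $\Tr_1^m=\sum_{\ell=0}^{m-1}(\cdot)^{2^\ell}$ by the coarser partial-Frobenius orbit sum $\sum_{j=0}^{m/k-1}(\cdot)^{2^{jk}}$, which is available precisely because $k\mid m$. Applying this operator to (\ref{3.14}) term by term: the constant $1$ contributes $\frac{m}{k}\cdot 1$, which vanishes in characteristic $2$ exactly because $\frac{m}{k}$ is even; the block $\sum_{i=0}^{k-1}(ab)^{2^i}$ reassembles into $\sum_{\ell=0}^{m-1}(ab)^{2^\ell}=\Tr_1^m(ab)$, since $jk+i$ runs over $0,\dots,m-1$ without the multiplicity $k$ that spoiled the naive trace; and the linear part $(a+a^{2^k})+(b+b^{2^k})$ telescopes to $(a+a^{2^m})+(b+b^{2^m})=(a+b)+(b+a)=0$, using $a^{2^m}=\bar a=b$ and $b^{2^m}=a$. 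What survives is $\Tr_1^m(ab)=0$.

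Finally I would juxtapose the two evaluations: the orbit-sum computation gives $\Tr_1^m(ab)=0$, while the telescoping chain gives $\Tr_1^m(ab)=1$; this contradiction shows Eq. (\ref{3.09}), hence Eq. (\ref{3.02}), has at most one solution for every $\gamma$, so $f$ permutes $\F_{2^{2m}}$. I expect the only delicate points to be bookkeeping: verifying that the pairs $(j,i)$ with $0\le j<m/k$ and $0\le i<k$ enumerate the exponents $\{0,\dots,m-1\}$ bijectively so that the orbit sum really rebuilds $\Tr_1^m$, and checking the telescoping cancellation of the linear terms. It is worth flagging explicitly why both hypotheses are indispensable to this method: $k\mid m$ is what makes $\sum_{j}(\cdot)^{2^{jk}}$ close up on $\Tr_1^m$, and $\frac{m}{k}$ even is what kills the constant term --- were $\frac{m}{k}$ odd, the identical computation would instead give $\Tr_1^m(ab)=1$, matching the chain and producing no contradiction at all.
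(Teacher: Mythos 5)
Your proposal is correct and is essentially the paper's own proof: the ``partial-Frobenius orbit sum'' $\sum_{j=0}^{m/k-1}(\cdot)^{2^{jk}}$ you construct is exactly the relative trace $\Tr_k^m(\cdot)$, which is what the paper applies to Eq.~(\ref{3.14}), with the same three evaluations ($\Tr_k^m(1)=\frac{m}{k}=0$ since $\frac{m}{k}$ is even, the block $\sum_{i=0}^{k-1}(ab)^{2^i}$ reassembling to $\Tr_1^m(ab)$, and the linear terms cancelling) against the same telescoping computation $\Tr_1^m\bigl(\frac{\lambda_1\lambda_2}{(\lambda_1+\lambda_2)^2}\bigr)=1$ from Theorem~\ref{Thm3.1}(i). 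Your explicit remark on why the plain $\Tr_1^m$ degenerates for even $k$ (contributing $k\,\Tr_1^m(ab)=0$) is a nice articulation of why the paper switches operators, but the argument itself coincides with the paper's.
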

\begin{proof}
Similarly to the proof of Theorem \ref{Thm3.1}, we can get the  equation
(Eq. (\ref{3.14}))
\[1+\sum\limits_{i=0}^{k-1} (ab)^{2^i}+a+b+a^{2^k} +b^{2^k}=0.\]
Applying $\Tr_k^m(\cdot)$ on both sides of the  equation, we get
\begin{eqnarray*}
0&=& \Tr_k^m\left(1+\sum\limits_{i=0}^{k-1} (ab)^{2^i}+a+b+a^{2^k} +b^{2^k}\right) \\
&=& \Tr_k^m(1)+\Tr_k^m\left( \sum\limits_{i=0}^{k-1} (ab)^{2^i} \right) + \Tr_k^m(a+b) + \Tr_k^m\left(a^{2^k} +b^{2^k} \right) \\
&=&  \Tr_k^m\left( \sum\limits_{i=0}^{k-1} (ab)^{2^i} \right) \\
&=&  \Tr_1^m(ab) \\
&=& \Tr_1^m\left( \frac{\lambda_1 \lambda_2}{(\lambda_1+\lambda_2)^2} \right)\\
&=& 1.
\end{eqnarray*}
The remainder of this proof is similar to the proof of Theorem \ref{Thm3.1}, we omit the details here.
\end{proof}

%Similarly as Theorem \ref{Thm3.2}, we can get the following Theorem \ref{Thm3.4} from Theorem \ref{Thm3.3}.
%\begin{Thm}
%\label{Thm3.4}
%Let $m,k$ be two even positive integers and $\frac{m}{k}$ is even.  For any $\delta \in \F_{2^{2m}}$, the polynomial
%\[f(x)=(x^{2^m}+x+\delta)^{(2^m-1)i+1}+x\]
%permutes $\F_{2^{2m}}$, where $i$ satisfies $(2^k+1)i\equiv 2^k ~(mod~2^m+1)$.
%\end{Thm}

As immediate consequences of Theorem \ref{Thm3.3} and Remark \ref{Remark3.1}, we get the following two corollaries.

\begin{Corollary}
\label{Corollary3.6}
Let $4 \mid m$. For any $\delta \in {\F}_{2^{2m}}$,  both the polynomials
\[f(x)=(x^{2^m}+x+\delta)^{\frac{2^{m+1}+3}{5}(2^m-1)+1}+x\]
and
\[g(x)=(x^{2^m}+x+\delta)^{\frac{3\cdot 2^m+7}{5}(2^m-1)+1}+x\]
permute ${\F}_{2^{2m}}$.
\end{Corollary}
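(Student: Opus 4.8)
The plan is to specialize Theorem~\ref{Thm3.3} to the case $k=2$ and then invoke Remark~\ref{Remark3.1} to transfer the permutation property from $f$ to $g$. First I would check that $k=2$ meets the hypotheses of Theorem~\ref{Thm3.3}: it is even; since $4\mid m$ we have $2\mid m$, so $k\mid m$; and $\frac{m}{k}=\frac{m}{2}$ is even precisely because $4\mid m$. This is exactly the point at which the stronger assumption $4\mid m$ (rather than merely $m$ even) is needed. With $k=2$ one has $2^{k}+1=5$, so Theorem~\ref{Thm3.3} will apply to any $i$ satisfying $5i\equiv 1~(\textrm{mod}~2^m+1)$, and Remark~\ref{Remark3.1} will apply with the companion condition $5j\equiv 2^k=4~(\textrm{mod}~2^m+1)$.

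Next I would verify that the exponent in $f$ corresponds to such an $i$. Writing $i=\frac{2^{m+1}+3}{5}$, I would confirm both integrality and the congruence. Since $4\mid m$ gives $2^m\equiv 1~(\textrm{mod}~5)$, we have $2^{m+1}+3\equiv 0~(\textrm{mod}~5)$, so $i\in\Z$; and using $2^m\equiv -1~(\textrm{mod}~2^m+1)$ one computes $5i=2^{m+1}+3\equiv 1~(\textrm{mod}~2^m+1)$. An elementary bound also gives $1\le i\le 2^m$. Hence Theorem~\ref{Thm3.3} yields that $f$ permutes $\F_{2^{2m}}$.

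For $g$, I would apply Remark~\ref{Remark3.1} with the same $k=2$, where the relevant companion congruence is $5j\equiv 4~(\textrm{mod}~2^m+1)$. Taking $j=\frac{3\cdot 2^m+7}{5}$, the same reasoning shows $5\mid 3\cdot 2^m+7$ (again from $2^m\equiv 1~(\textrm{mod}~5)$), that $1\le j\le 2^m$, and that $5j=3\cdot 2^m+7\equiv 4~(\textrm{mod}~2^m+1)$ via $2^m\equiv -1$. Since $f$ already permutes $\F_{2^{2m}}$, Remark~\ref{Remark3.1} immediately gives that $g$ does as well.

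The statement is a direct corollary, so there is no serious mathematical obstacle; the only points demanding care are bookkeeping ones. The subtlest is ensuring $\frac{m}{k}$ is even, which is what forces the hypothesis to be $4\mid m$ and not merely $m$ even, and which simultaneously guarantees the divisibility by $5$ needed for $i$ and $j$ to be integers. Everything else reduces to the two congruence verifications modulo $2^m+1$ together with the elementary range estimates.
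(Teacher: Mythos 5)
Your proposal is correct and follows exactly the paper's own route: take $k=2$ in Theorem \ref{Thm3.3} (using $4\mid m$ to ensure $\frac{m}{k}$ is even), verify $5i\equiv 1$ and $5j\equiv 4~(\textrm{mod}~2^m+1)$ via $2^m\equiv -1~(\textrm{mod}~2^m+1)$, and transfer from $f$ to $g$ by Remark \ref{Remark3.1}. The only difference is that you make explicit the integrality and range checks on $i$ and $j$ that the paper leaves implicit, which is a harmless (indeed welcome) addition.
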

\begin{proof}
Let $k=2$. Then $m/k$ is even, since $4 \mid m$. Note that $(2^k+1) \frac{2^{m+1}+3}{5} \equiv 1 ~(\textrm{mod}~2^m+1)$,
$(2^k+1)  \frac{3\cdot 2^m+7}{5} \equiv 2^k ~(\textrm{mod}~2^m+1)$.
By Theorem \ref{Thm3.3} and Remark \ref{Remark3.1}, the conclusions hold.
\end{proof}

\begin{Corollary}
\label{Corollary3.7}
Let $8 \mid m$. For any $\delta \in {\F}_{2^{2m}}$,  both the polynomials
\[f(x)=(x^{2^m}+x+\delta)^{\frac{2^{m+3}+9}{17}(2^m-1)+1}+x\]
and
\[g(x)=(x^{2^m}+x+\delta)^{\frac{ 9\cdot 2^m+25}{17}(2^m-1)+1}+x\]
permute ${\F}_{2^{2m}}$.
\end{Corollary}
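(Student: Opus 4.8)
The plan is to obtain this corollary as a direct specialization of Theorem~\ref{Thm3.3} together with Remark~\ref{Remark3.1}, taking $k=4$ (so that $2^k+1=17$), in exact analogy with the proof of Corollary~\ref{Corollary3.6} where $k=2$ and $2^k+1=5$. First I would verify that the hypotheses of Theorem~\ref{Thm3.3} hold for this choice: with $k=4$ both $m$ and $k$ are even, and $8\mid m$ gives $k=4\mid m$ with $m/k=m/4$ even (writing $m=8\ell$ yields $m/4=2\ell$). Thus Theorem~\ref{Thm3.3} is applicable once the exponent of $f$ is shown to have the required form.

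Next I would check the two congruences and the integrality of the stated exponents. Reducing modulo $2^m+1$ and using $2^m\equiv-1$, one finds $17\cdot\frac{2^{m+3}+9}{17}=8\cdot2^m+9\equiv-8+9=1$ and $17\cdot\frac{9\cdot2^m+25}{17}=9\cdot2^m+25\equiv-9+25=16=2^4$, so $i=\frac{2^{m+3}+9}{17}$ and $j=\frac{9\cdot2^m+25}{17}$ satisfy $(2^k+1)\,i\equiv1$ and $(2^k+1)\,j\equiv2^k\pmod{2^m+1}$, which are precisely the relations demanded by Remark~\ref{Remark3.1}. As a consistency check, $i+j=\frac{(8+9)2^m+34}{17}=2^m+2$, matching the identity $i+j=2^m+2$ forced in Theorem~\ref{Thm3.2}.

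The only genuine point requiring care is that $i$ and $j$ are honest integers lying in the admissible range $1\le i,j\le2^m$ needed to invoke Remark~\ref{Remark3.1}. Divisibility by $17$ hinges on the multiplicative order of $2$ modulo $17$ being exactly $8$; since $8\mid m$ we get $2^m\equiv1\pmod{17}$, whence $2^{m+3}+9\equiv8+9\equiv0$ and $9\cdot2^m+25\equiv34\equiv0\pmod{17}$, so $17$ divides both numerators. The size estimates $i\approx\frac{8}{17}2^m$ and $j\approx\frac{9}{17}2^m$ then give $1\le i,j\le2^m$. With these facts in place, Theorem~\ref{Thm3.3} yields that $f(x)$ permutes $\F_{2^{2m}}$, and Remark~\ref{Remark3.1} (with $t=2^k+1=17$) transfers the permutation property to $g(x)$. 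All the substantive analysis resides in Theorem~\ref{Thm3.3}; here the only mild obstacle is the elementary arithmetic above, in particular the divisibility by $17$, which depends on the order of $2$ modulo $17$ being $8$ and hence on the hypothesis $8\mid m$.
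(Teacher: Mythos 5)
Your proposal is correct and follows exactly the paper's own route: specialize Theorem~\ref{Thm3.3} with $k=4$ (noting $8\mid m$ gives $m/k$ even) and verify $17i\equiv 1$, $17j\equiv 2^4 \pmod{2^m+1}$, then transfer to $g$ via Remark~\ref{Remark3.1}. Your additional checks of integrality (via the order of $2$ modulo $17$ being $8$) and of the range $1\le i,j\le 2^m$ are details the paper leaves implicit, and they are carried out correctly.
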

\begin{proof}
Let $k=4$, then $\frac{m}{k}$ is even since $8 \mid m$. Note that  $(2^k+1) \frac{2^{m+3}+9}{17} \equiv 1 ~(\textrm{mod}~2^m+1)$,
$(2^k+1)  \frac{ 9\cdot 2^m+25}{17} \equiv 2^k ~(\textrm{mod}~2^m+1)$.
By Theorem \ref{Thm3.3} and Remark \ref{Remark3.1}, the conclusions hold.
\end{proof}

%In fact, we can easily get infinitely many classes of PPs with form (\ref{3.01}) only through solve the
%congruence equation $(2^k+1)\cdot i\equiv 1 ~\textrm{or}~ 2^k ~(\textrm{mod}~2^m+1)$  under the conditions
%about $k$ and $m$ in Theorem \ref{Thm3.1}, \ref{Thm3.2} and \ref{Thm3.3}.

\section{Concluding remarks}
\label{sec4}
In this paper, we investigate PPs of the form $(x^{2^m}+x+\delta)^{i(2^m-1)+1}+x$ over the finite field $\F_{2^{2m}}$ for
$i$ satisfying $(2^k+1)i \equiv 1 ~\textrm{or}~ 2^k ~(\textrm{mod}~ 2^m+1)$, where $1 \leq k \leq m-1$.
Most of the previous constructions  can be covered by our results. Besides,
many new classes of PPs with the above form are obtained. Most importantly, we can easily get
infinitely many classes of PPs with the form (\ref{3.01}) only through solving the
congruence equation $(2^k+1)\cdot i\equiv 1 ~\textrm{or}~ 2^k ~(\textrm{mod}~2^m+1)$  under some restriction on $k$ and $m$.

We summarize all the known PPs of the form $f(x)=(x^{2^m}+x+\delta)^{(2^m-1)i+1}+x$ over $\F_{2^{2m}}$ in Table \ref{table1}, in which
only the last two rows  are not  covered by our result. At last,
we again emphasis that there is no restriction on $\delta \in \F_{2^{2m}}$ in all the classes.

\scriptsize
\begin{table}[!h]\caption{{ Known PPs of the form $(x^{2^m}+x+\delta)^{i(2^m-1)+1}+x$ over $\F_{2^{2m}}$}}
	\label{table1}
	\begin{center}
		\begin{tabular}{|c|c|c|c|}
			\hline
			$m$¡¡     &¡¡$k$  &     $i$  &  References \\
			\hline
			all $m$   &   $m-1$ &    $i \in \{2,2^m\}$    &  \cite{Tu1} \cite{Wang}, Corollary \ref{Corollary3.1} \\
			\hline
			$m$ is even  &  1   &   $i \in \left\{\frac{2^m+2}{3},   \frac{2\cdot (2^m+2)}{3} \right\}$ &  \cite{Zha2} \cite{Wang}, Corollary \ref{Corollary3.2} \\
			\hline
			$m$ is odd  &   2  &
			\begin{tabular}{c}
				$m \equiv 1 ~(\textrm{mod}~ 4), i \in \{\frac{3 \cdot 2^{m} +4}{5},  \frac{ 2^{m+1} +6}{5}\}$ \\
				$m \equiv 3 ~(\textrm{mod}~ 4), i \in \{\frac{ 2^{m} +2}{5},  \frac{ 2^{m+2} +8}{5}\}$
			\end{tabular}
			& \cite{Gupta}, Corollary \ref{Corollary3.3}   \\
			\hline
			$m$ is even  &   3  &
			\begin{tabular}{c}
				$m \equiv 0 ~({mod}~3),~i\in \left\{\frac{2^{m+2}+5}{9},  \frac{5 \cdot 2^{m}+13}{9}\right\}$ \\
				$m \equiv 1 ~({mod}~3),~i\in \left\{\frac{2^{m}+2}{9}, \frac{ 2^{m+3}+16}{9}\right\}$\\
				$m \equiv 2 ~({mod}~3),~i\in \left\{\frac{7 \cdot 2^{m}+8}{9}, \frac{ 2^{m+1}+10}{9}\right\}$
			\end{tabular}
			& Corollary \ref{Corollary3.4}   \\
			\hline
			$m$ is odd  &   4  &
			\begin{tabular}{c}
				$m \equiv 1 ~({mod}~8), i\in \left\{\frac{11\cdot 2^{m}+12}{17}, \frac{3 \cdot 2^{m+1}+22}{17}\right\}$\\
				$m \equiv 3 ~({mod}~8), i\in \left\{\frac{15\cdot 2^{m}+16}{17}, \frac{ 2^{m+1}+18}{17}\right\}$\\
				$m \equiv 5 ~({mod}~8), i\in \left\{\frac{2^{m}+2}{17}, \frac{ 2^{m+4}+32}{17}\right\}$\\
				$m \equiv 7 ~({mod}~8), i\in \left\{\frac{5 \cdot 2^{m}+6}{17}, \frac{ 3\cdot 2^{m+2}+28}{17}\right\}$
			\end{tabular}
			& Corollary \ref{Corollary3.5}   \\
			\hline
			$4 \mid m$  &   2  &  $i \in \left\{\frac{2^{m+1}+3}{5},   \frac{3 \cdot 2^m+7}{5} \right\}$   & Corollary \ref{Corollary3.6}   \\
			\hline
			$8 \mid m$  &   4  &  $i \in \left\{\frac{2^{m+3}+9}{17},   \frac{9 \cdot 2^m +25}{17} \right\}$   & Corollary \ref{Corollary3.7}   \\
			\hline
			all $m$  &    &  $i =2^{m-1}+1$   & \cite{Wang}   \\
			\hline
			$m \not\equiv 0~(\textrm{mod}~3)$ &    &  $i \in \{2^{m-2}+1, 3\cdot 2^{m-2}+1\}$   & \cite{Wang}  \cite{Zha2} \\
			\hline
		\end{tabular}
	\end{center}
 $k$ satisfies $ (2^k+1)i \equiv 1 ~\textrm{or}~ 2^k ~(\textrm{mod}~ 2^m+1)$
\end{table}\normalsize

\section*{Acknowledgements}
This work is partially supported by  the National Natural Science Foundation of China
under Grant No. 61502482.


\begin{thebibliography}{105}
\bibitem{Hellesseth}   T. Helleseth, V. Zinoviev,  New kloosterman sums identities over ${\F}_{2^m}$ for all $m$.
Finite Fields Appl. 9 (2) (2003) 187-193.

\bibitem{Hou}  X. Hou, Permutation polynomials over finite fields $-$ A survey of recent advances,
Finite Fields  Appl. 32 (2015) 82-119.

\bibitem{Li} N. Li, T. Helleseth, X. Tang, Further results on a class of permutation polynomials over finite fields,
Finite Fields  Appl. 22 (2013) 16-23.

\bibitem{Lidl} R. Lidl, H. Niederreiter, Finite Fields, second ed., Encyclopedia Math. Appl.,
vol. 20, Cambridge University Press, Cambridge, 1997.

\bibitem{Gullen} G.L. Mullen, D. Panario,  Handbook of Finite Fields, Taylor $\&$ Francis, Boca Raton, 2013.

\bibitem{Gupta} R.Gupta, R.K. Sharma, Further results on permutation polynomials of the form
$(x^{p^m}-x+\delta)^s+x$ over ${\F}_{p^{2m}}$, Finite Fields Appl. 50 (2018) 196-208.

\bibitem{Niho} Y. Niho, Multi-valued cross-correlation functions between two maximal linear recusive sequences, PhD dissertation,
University of Southern, California, Los Angeles, 1972.


\bibitem{Tu1} Z. Tu, X. Zeng, Y. Jiang, Two classes of permutation polynomials having the
form $(x^{2^m}+x+\delta)^s+x$. Finite Fields Appl. 31 (2015) 12-24.

\bibitem{Wang} L. Wang, B. Wu, Z. Liu, Further results on permutation polynomials of the form
$(x^{p^m}-x+\delta)^s+L(x)$ over ${\F}_{p^{2m}}$, Finite Fields Appl. 44 (2017) 92-112.

\bibitem{YuanJ1}  J. Yuan, C. Ding, Four classes of permutation polynomials of $\F_{2^m}$,
Finite Fields Appl. 13 (4) (2007) 869-876.

\bibitem{YuanJ2}  J. Yuan, C. Ding, H. Wang, J. Pieprzyk, Permutation polynomials of the form
$(x^p-x+\delta)^s+L(x)$, Finite Fields Appl. 14 (2) (2008) 482-493.

\bibitem{Yuan1} P. Yuan, C. Ding, Further results on permutation polynomials over finite fields,
Finite Fields Appl. 27 (2014) 88-103.

\bibitem{Zeng}  X. Zeng, X. Zhu, L. Hu, Two new permutation polynomials with the form $(x^{2^k}+x+\delta)^s+x$
over $\F_{2^n}$, Appl. Algebra Eng. Commun. Comput. 21 (2) (2010) 145-150.

\bibitem{Zeng1}  X. Zeng, S. Tian, Z. Tu, Permutation polynomials from trace functions over finite fields,
Finite Fields Appl. 35 (2015) 36-51.

\bibitem{Zha1}  Z. Zha, L. Hu,  Two classes of permutation polynomials over finite fields,
Finite Fields  Appl. 18 (4) (2012) 781-790.

\bibitem{Zha2} Z. Zha, L. Hu, Some classes of permutation polynomials of the form $(x^{p^m}-x+\delta)^s+x$
over $\F_{p^{2m}}$,  Finite Fields Appl. 40 (2016) 150-162.
\end{thebibliography}
\end{document}